\documentclass[prx,twocolumn,english,superscriptaddress,floatfix]{revtex4-2}
\pdfoutput=1

\usepackage[T1]{fontenc}
\usepackage[utf8]{inputenc}
\usepackage{caption}
\usepackage{amsmath, amsthm, amssymb,amscd, mathrsfs, amsfonts, mathtools,tikz-cd}
\usepackage{quantikz}
\usepackage{float} 
\usepackage{dsfont}
\usepackage{relsize}
\usepackage{lipsum}
\usepackage{pbox}
\usepackage[all]{xy}
\usepackage{todonotes}
\usepackage{xcolor}
\usepackage[bookmarks=true,
bookmarksnumbered=true, breaklinks=true,
pdfstartview=FitH, hyperfigures=false,
plainpages=false, naturalnames=true,
colorlinks=true,pagebackref=true,
pdfpagelabels]{hyperref}



\usepackage{tikz}
\usepackage{pgfplots}

\pgfplotsset{compat=1.10}
\usepgfplotslibrary{fillbetween}
\usetikzlibrary{patterns}
\usetikzlibrary{matrix}

\usepackage[]{amsrefs}

\renewcommand{\PrintDOI}[1]{\href{http://dx.doi.org/\detokenize{#1}}{doi: \detokenize{#1}}%
	\IfEmptyBibField{pages}{, (to appear in print)}{}}

\def\commutatif{\ar@{}[rd]|{\circlearrowleft}}

\newtheorem{thm}{Theorem}[section]
\newtheorem*{sch}{Schur's Lemma}
\newtheorem*{MT}{Main Theorem}

\newtheorem{prop}[thm]{Proposition}

\newtheorem{cor}[thm]{Corollary}

\theoremstyle{definition}
\newtheorem{defn}[thm]{Definition}

\theoremstyle{remark}
\newtheorem{rmk}[thm]{Remark}

\newtheorem{ex}[thm]{Example}

\allowdisplaybreaks


\newcommand{\C}{\mathbb{C}}

\usepackage[draft, commentmarkup=todo,
  todonotes={textsize=tiny, textwidth=0.83in}]{changes}

\definechangesauthor[name={Ilya Safro}, color=yellow]{IS}

\definechangesauthor[name={Boris}, color=blue]{BT}


\newcommand\minOne{\operatorname{-}1}
\newcommand\minTwo{\operatorname{-}2}
\newcommand\redBull{\color{red}{\bullet}}



\usepackage{bbm}




\hypersetup{
	colorlinks = true,
	urlcolor = blue,
	linkcolor = blue,
	citecolor = red,
	pdfpagemode = UseNone
}

\usepackage{youngtab}
\usepackage{ytableau}

\begin{document}

\author{Boris Tsvelikhovskiy} 
\affiliation{Department of Mathematics, University of California, Riverside, CA, USA} 

\author{Ilya Safro} 
\affiliation{Department of Computer and Information Sciences, University of Delaware, Newark, DE, USA} 

\author{Yuri Alexeev} 
\affiliation{Computational Science Division, Argonne National Laboratory, Argonne, IL, USA} 

\title{Symmetries and Dimension Reduction in Quantum Approximate Optimization Algorithm}

\begin{abstract}
In this paper, the Quantum Approximate Optimization Algorithm (QAOA) is analyzed by leveraging symmetries inherent in problem Hamiltonians. 
We focus on the generalized formulation of optimization problems defined on the sets of $n$-element $d$-ary strings. 
Our main contribution encompasses dimension reductions for the originally proposed QAOA. These reductions retain the same problem Hamiltonian as the original QAOA but differ in terms of their mixer Hamiltonian, and initial state. The vast QAOA space has a daunting dimension of exponential scaling in $n$, where certain reduced QAOA spaces exhibit dimensions governed by polynomial functions. This phenomenon is illustrated in this paper, by providing partitions corresponding to polynomial dimensions in the corresponding subspaces. As a result, each reduced QAOA partition encapsulates unique classical solutions absent in others, allowing us to establish a lower bound on the number of solutions to the initial optimization problem. Our novel approach opens promising practical advantages in accelerating the algorithm. Restricting the algorithm to Hilbert spaces of smaller dimension may lead to significant acceleration of  both quantum  and classical simulation of circuits and serve  as a  tool to cope with barren plateaus problem.
\end{abstract}

\maketitle

\hspace{-0.17in}\textbf{Keywords:} quantum approximate optimization algorithm, mixer Hamiltonians, representation theory
 
\section{Introduction}

Over the past few years, the Quantum Approximate Optimization Algorithm (QAOA) has emerged as a promising approach for tackling combinatorial optimization challenges on quantum computers. Moreover, it was recently established that the algorithm can be used to perform universal quantum computation (see \cite{Ll,MBZ}). However, the efficacy of QAOA crucially hinges on the careful design of its parameterized ansatz, which can be a non-trivial task, particularly for complex problem instances.  In a recent work by Hastings \cite{Has}, it is shown that a certain classical local algorithm outperforms the one-step  QAOA for the MAX-CUT problem on triangle-free graphs for all choices of degree. The author also claims that, for certain problems, increasing the number of steps, $p$, by a bounded amount may not yield substantial enhancements in performance. Other numerical studies on small problem instances also suggest that surpassing small values of $p$ may be necessary to achieve potential quantum advantage (see \cite{SA, WHJR, ZWCPL}).  On the other hand, as the depth increases, complications may arise, necessitating new approaches to managing resources and optimizing performance.

Advances in QAOA have shown promising strides in enhancing its performance on various combinatorial optimization tasks: machine learning approaches \cite{KSCAB,VBMSBJNM}, QAOA parameter transferability \cite{simtransfer}, adjusting initial distribution \cite{kulshrestha2022beinit}, multistart  \cite{shaydulin2019multistart} and warm-start \cite{egger2021warm} methods 
are among the techniques to improve its performance. To tackle large-scale optimization problems, QAOA was used as a component of multi- and single-level decomposition algorithms \cite{ushijima2021multilevel,shaydulin2019hybrid}. A more recent technique, pioneered in \cite{GHPGA, PGAG}, involves leveraging the Yang-Baxter equation for quantum circuit compression. These combined efforts mark significant progress toward harnessing the potential of QAOA in solving complex optimization problems. All these avenues are extremely important as QAOA is massively proposed as a universal strategy solver for applications from finance \cite{herman2023quantum} to biology \cite{boulebnane2023peptide}.

Another promising strategy for addressing the challenges and constraints mentioned earlier involves utilizing the symmetries intrinsic to QAOA, which encompass the symmetries of the target classical optimization problem and extend beyond them. Symmetries have played a foundational role not only in classical physics and mathematics but also in the realm of quantum computing. In quantum information theory, symmetries have been instrumental in the development of robust error correction codes, which are critical for fault-tolerant quantum computation \cite{Got}. Furthermore, in the study of quantum many-body systems, symmetries are central to the classification and characterization of phases of matter, providing valuable insights into the diverse array of quantum phenomena \cite{CGLW}. Their significance extends to quantum cryptography, where symmetries underlie the security of protocols such as Quantum Key Distribution (QKD) \cite{SBCDLP}. In this context, our investigation into symmetries within the QAOA is a natural progression, capitalizing on their established importance in quantum computing methodologies. 

\textit{Our contribution.}
We present a general approach to harness the principles of representation theory for groups of symmetries of the problem Hamiltonian into  facilitating various \textit{reductions} of  the original QAOA. Each reduced QAOA operates within a subspace of the initial ambient Hilbert space, featuring its own mixer Hamiltonian. While the ambient Hilbert space $W$ has a dimension of $d^n$,  exponential in $n$, some subspaces exhibit dimensions determined by polynomial functions in $n$.  \textit{The explicit construction is elaborated in Theorem \ref{MainObs}.}  Our main contributions are summarized as follows.

\begin{enumerate}
    \item We develop the framework for selecting the mixer Hamiltonian based on the symmetries inherent in the optimization problem under consideration.  This methodology, to the best of our knowledge, has not been explored in the literature.

    \item In instances where the group of symmetries includes the symmetric or unitary group, we construct reduced QAOAs for each subspace $W_\lambda$ appearing in decomposition of $W$ based on the group action.
    
    \item Each obtained reduction, denoted as $QAOA_\lambda$ (here $\lambda\vdash n$ is a partition), maintains the same problem Hamiltonian $H_P$ as the original QAOA. However, they differ in their choice of mixer Hamiltonian $H_{M,\lambda}$ and initial state $\xi_\lambda$, residing within the subspace $W_{\lambda}\subset W$, the explicit expressions for both are provided and the respective properties are rigorously verified.
    
    \item We show that, notably, while the ambient Hilbert space $W$ boasts a dimension of $d^n$, exponential in $n$, this is not the case for certain $W_\lambda$'s. For instance, Example \ref{PolDim} presents a set of partitions where the corresponding subspaces have dimensions determined by polynomial functions in $n$.
    
    \item Furthermore, we prove that each reduced QAOA encompasses at least one classical solution not found in the others. This observation enables us to establish a lower bound on the number of solutions to the initial optimization problem (see Corollary \ref{MinSolCor}).
\end{enumerate}

The exposition in the paper is organized as follows. In Section $2$, we provide a comprehensive overview of the main results. Sections $3$ to $7$ offer concise summaries of both the Quantum Approximate Optimization Algorithm and key concepts in representation theory pertinent to this paper's objectives. The sections offer references for a more comprehensive understanding of these concepts. 
Section $6$ contains an important, and to the best of our knowledge, novel result. It offers an elucidation of the groups of symmetries associated with problem Hamiltonians stemming from optimization problems defined on  $d$-ary strings of length $n$ with a quadratic objective function (see Theorem \ref{HamThm}). In particular, this result immediately implies (Corollary \ref{HamThmCor}) that unless the problem Hamiltonian is constant, its group of symmetries will never include the subgroup of unitary matrices $SU_d(\mathbb{C})$. This section concludes with the general formulation of our main result (Theorem \ref{MainObs}).
In Section $8$, we examine specific instances of Theorem \ref{MainObs}, and explore potential applications in Section $9$. Detailed proofs the theorems and corollaries are deferred to the Appendix.

\section{Overview of the approach}
Let $\mathbb{D}^n:=\{0,1,\hdots,d-1\}^n$ be the set of $n$-element $d$-ary strings and let $\mathcal{S}$ denote the group of permutations acting on these $d^n$ elements. A large class of optimization problems consists of finding the elements in $\mathbb{D}^n$. A significant class of optimization problems entails finding  elements in $\mathbb{D}^n$ on which a given function 
\begin{equation}\label{eq:optfunc}
F: \mathbb{D}^n \rightarrow \mathbb{R}
\end{equation}
(i.e., the optimization objective with possible constraints) attains either minimum or maximum values.

Consider the vector space $W$  of dimension $d^n$ with basis $\{v_x\}_{x\in \mathbb{D}^n},$ indexed by elements of $\mathbb{D}^n$. The Hamiltonian $H_F$ is said to \textit{represent} a function $F:\mathcal{S}\rightarrow \mathbb{R}$ if it satisfies $H_F(v_{x})=F(x)v_x$ for any $x\in\mathcal{S}$. The quantum 'analogue' of the original problem of optimizing  $F$ is given by the following classical-quantum 'dictionary':

\begin{itemize}
	\item set $\mathbb{D}^n\rightsquigarrow$ vector space $W$;
	\item objective function $F\rightsquigarrow$ linear operator $H_F$ acting on $W$;
	\item minima of $F$ on $\mathbb{D}^n\rightsquigarrow$ lowest energy states  of $H_F$ in $W$.
\end{itemize}

One of the most frequently used algorithms for solving the aforementioned quantum version of the original problem (\ref{eq:optfunc}) is the Quantum Approximate Optimization Algorithm (QAOA), introduced in \cite{QAOA}. In the framework of QAOA, the Hamiltonian $H_F$ is usually referred to as the \textit{problem Hamiltonian} and is denoted by $H_P$ (as in Farhi's et al. paper \cite{QAOA}). We will adhere to this notation.

The essential component of QAOA is a mixer Hamiltonian $H_M$, which possesses a unique lowest energy state $|\xi\rangle \in W$ and also satisfies the requirements of the Perron-Frobenius theorem (see Theorem \ref{PF}). The main idea behind the QAOA algorithm is to perform a multistep transformation of the mixer Hamiltonian $H_M$ into the problem Hamiltonian, such that the image of the lowest-energy vector from the preceding step becomes the lowest-energy vector in the subsequent step.

The algorithm begins by preparing the state $|\xi\rangle$, which is the ground state for the mixer Hamiltonian $H_M$, and then proceeds with multiple alternating applications of (certain exponents of) the problem and mixer Hamiltonians. The number of iterations is commonly denoted by $p$ (also known as QAOA depth). We will use $\mathcal{G}$ to express the entire composition of operators
\begin{equation}\label{qaoa-chain}
\mathcal{G}:=e^{-i\beta_1 H_M}e^{-i\gamma_1 H_P}\hdots e^{-i\beta_p H_M}e^{-i\gamma_p H_P}.
\end{equation}

The final step of QAOA involves performing a measurement of the state 
obtained after applying $\mathcal{G}$  in the standard basis. For a more detailed description of the algorithm, we refer the reader to Section $2$ and the references therein.

While the Hamiltonian $H_P$, representing the objective function, is uniquely determined by the classical problem (see Remark \ref{uniqueHam}), there is some flexibility in the choice of the mixer Hamiltonian. The convergence of QAOA to a classical state representing an element on which $F$ attains a minimum value is guaranteed by the adiabatic theorem, as long as the mixer Hamiltonian satisfies the conditions of the Perron-Frobenius theorem (see Theorem \ref{PF}). The standard and most common choice of mixer Hamiltonian is the one comprised of Pauli $X$-gates $X_j$, namely, $H_M=\sum\limits_{0\leq j \leq \ell-1} X_j$, where $\ell$ is the number of qubits required for the (re)formulation of the original problem. While this choice has some advantages, it does not take into account any special attributes of a specific problem.

The choice of mixer Hamiltonian has been discussed in the literature. In \cite{HWORVB}, the authors introduced a quantum alternating operator ansatz to allow more general families of Hamiltonian operators. The mixers in that article are useful for optimization problems with hard  constraints that must always be satisfied (thus defining a feasible subspace of $W$) and soft constraints whose violation needs to be minimized.

In \cite{GPSK}, it was experimentally verified (via numerical simulations) that linear combinations of $X$- and $Y$-Pauli gates as mixers can outperform the standard low depth QAOA. More examples can be found in  \cite{BFL,GBOE,ZLCMBE} and subsequent references.

Let $G$ be the group of symmetries of the problem Hamiltonian. Recently, it has been observed that 'taking the $G$-action into account' allows for both considering reductions of QAOA to subspaces of $W$ and increasing fidelity (see \cite{SHHS}, \cite{SG}, and \cite{SW}). It is also natural to exploit this group of symmetries and choose a mixer Hamiltonian that \textit{behaves well} with respect to the $G$-action on $W$ (preserves decomposition \eqref{generalDecomp} below). We will focus on the design of such mixers. Now, let us elaborate on the properties we would like $H_M$ to have.

If a permutation $g \in \mathcal{S}$ is \textit{undetectable} by $F$, i.e., $F(g(x)) = F(x)$ for any $x \in \mathbb{D}^n$, then $g$ is called a \textit{symmetry} of $F$. Such symmetries form a subgroup $\widetilde{G} \subseteq \mathcal{S}$. The set $\mathbb{D}^n$ can be expressed as a disjoint union of $\widetilde{G}$-orbits: 
$$\mathbb{D}^n = \bigcup_{j=1}^{m} \mathcal{O}_j.$$ 
The action of $\widetilde{G}$ on the set $\mathbb{D}^n$ extends to an action on $W$. The actions of $\widetilde{G}$ and $H_P$ on $W$ commute:
$$H_P(g(w)) = g(H_P(w)) \quad \forall w \in W, \quad \forall g \in \widetilde{G}.$$
Therefore, $W$ is a \textit{representation} of $\widetilde{G}$ (see Section 3.1) and can be expressed as a direct sum of subspaces:
\begin{equation}
    W = \bigoplus_i W_i,
    \label{generalDecomp}
\end{equation}
where the index $i$ enumerates irreducible representations $V_i$ of $\widetilde{G}$ appearing in $W$ and each $W_i = V_i^{\oplus m_i}$. We suggest that $H_M$ should be chosen in such a way that it preserves the decomposition in \eqref{generalDecomp}. For instance, it suffices for $H_M$ to lie in the centralizer of $G$ to fulfill this requirement. In case the initial vector $|\xi\rangle$ is chosen to be inside one of the $W_i$'s, all the iterations of QAOA will keep it within $W_i$. In other words, QAOA can be \textit{restricted} to $W_i$, which in certain cases has significantly lower dimension than $n^d$, the dimension of the ambient space $W$ (see Figures \ref{OldPic} and \ref{NewPic} for a schematic illustration).

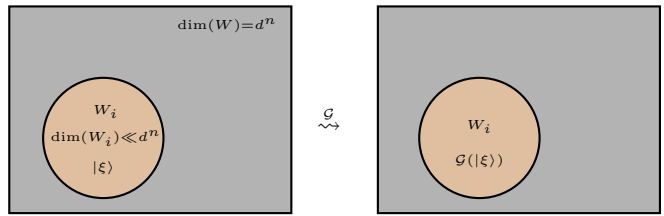
\begin{figure}[htbp!]
	\begin{center}	
		\begin{tikzpicture}[scale=0.5,thick]
			\draw[thick, fill=black!30] (-2.5,-2) -- (5,-2) -- (5,3.5) -- (-2.5,3.5) -- cycle;
			
			\draw[fill=brown!50] (0,0) circle(1.6cm);
			\node (A) at (3.3,3) {$\mathsmaller{\mathsmaller{\dim(W)=d^n}}$};
			
			\node (C) at (6,0.5) {$\overset{\mathsmaller{\mathsmaller{\mathcal{G}}}}{\rightsquigarrow}$};
			
			\node (B) at (0.07,0) {$\mathsmaller{\mathsmaller{\dim(W_i)\ll d^n}}$};
			
			\node (B) at (0.07,0.7) {$\mathsmaller{\mathsmaller{W_i}}$};
			
			\draw[thick, fill=black!30] (7.3,-2) -- (14.8,-2) -- (14.8,3.5) -- (7.3,3.5) -- cycle;
			\draw[fill=brown!50] (10,0) circle(1.6cm);
			\node at (0,-0.8) {$\mathsmaller{\mathsmaller{|\xi\rangle}}$};
			
			\node (B) at (10,0.3) {$\mathsmaller{\mathsmaller{W_i}}$};
			
			\node at (10,-0.6) {$\mathsmaller{\mathsmaller{\mathcal{G}(|\xi\rangle)}}$};
			
		\end{tikzpicture}
	\end{center}
 \caption{QAOA in the ambient space}
 \label{OldPic}
\end{figure}
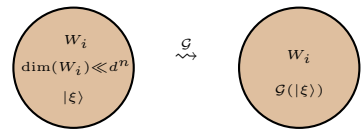
\begin{figure}[htbp!]
	\begin{center}	
		\begin{tikzpicture}[scale=0.5,thick]

			\draw[fill=brown!50] (2,0) circle(1.6cm);
			
			\node (C) at (5,0.5) {$\overset{\mathsmaller{\mathsmaller{\mathcal{G}}}}{\rightsquigarrow}$};
			
			\node (B) at (2.07,0) {$\mathsmaller{\mathsmaller{\dim(W_i)\ll d^n}}$};
			
			\node (B) at (2.07,0.7) {$\mathsmaller{\mathsmaller{W_i}}$};
			
			\draw[fill=brown!50] (8,0) circle(1.6cm);
			\node at (2,-0.8) {$\mathsmaller{\mathsmaller{|\xi\rangle}}$};
			
			\node (B) at (8,0.3) {$\mathsmaller{\mathsmaller{W_i}}$};
			
			\node at (8,-0.6) {$\mathsmaller{\mathsmaller{\mathcal{G}(|\xi\rangle)}}$};
			
		\end{tikzpicture}
	\end{center}
 \caption{QAOA restricted to a subspace of smaller dimension}
 \label{NewPic}
\end{figure}

Alternatively,  $G$ can be defined as the subgroup of elements whose action on $W$ commutes with that of $H_P$: $$\{g\in SU(W)~|~[g,H_M]=[g,H_P]=0\}.$$

It is important to point out that while $\widetilde{G}$ is a finite group, $G$ is usually infinite. On the classical level one can only detect $\widetilde{G}$, which is a refinement (subgroup) of $G$.  We give an explicit description of $G$ in case the problem Hamiltonian $H_P$ represents a quadratic objective function 
\begin{equation}
    F = a + \sum_{k=1}^n \beta_k \mathbf{x_k} + \sum_{1\leq i\leq j\leq n} \alpha_{ij}\mathbf{x_ix_j}
\label{eq:F_definition}
\end{equation}
with coefficients $a$, $\alpha_{ij}$, $\beta_k \in \mathbb{R}$ and variables $\mathbf{x_i}\in\{0,1,\ldots, d-1\}$. The corresponding result can be found in Theorem \ref{HamThm}. In case $d=2$, the (classical) optimization problem is known as quadratic unconstrained binary optimization (QUBO).

Notice that $W$ has a natural structure of an $n$-fold tensor product of $n$ copies of $V_{\mathbb{D}}$, a $d$-dimensional vector space representing possible states of a single qudit: $W=V_{\mathbb{D}}^{\otimes n}$. Furthermore, the unitary group $SU_d(\mathbb{C})$ and symmetric group $S_n$ naturally act on $W$:

\begin{itemize}
    \item $g(v_1\otimes\hdots\otimes v_n):=g(v_1)\otimes\hdots\otimes g(v_n)$ for $g\in SU_d(\mathbb{C})$ and
    \item $\sigma(v_1\otimes\hdots\otimes v_n):=v_{\sigma(1)}\otimes\hdots\otimes v_{\sigma(n)}$ for $\sigma\in S_n$.
\end{itemize}

The two actions commute with each other. It so happens that in case  $G$ contains one of the subgroups $S_n$ or $SU_d(\mathbb{C})$, the decomposition in Equation \eqref{generalDecomp} admits an illuminating description:
\begin{equation}
W=\underset{r(\lambda)\leq d}{\underset{\lambda\vdash n}{\bigoplus}} W_{\lambda},
\label{SWdecomp}
\end{equation}
where $\lambda$ runs over partitions of $n$ with at most $d$ constituents. This result is known under the name of Schur-Weyl duality and is the main topic of Section $5.2$.

Motivated by the decomposition in Equation \eqref{SWdecomp}, we mainly focus on the cases when $G$ contains one of the groups $S_n$ or $SU_d(\mathbb{C})$. 
Notice that the classical refinement of $G=SU_d(\mathbb{C})$ is its Weyl subgroup $\widetilde{G}=S_d\subset SU_d(\mathbb{C})$ (acting on $\mathbb{D}^n$). Considering the latter group of symmetries would be insufficient to infer the existence of decomposition presented in \eqref{SWdecomp}.

In this work, we introduce mixer Hamiltonians $H_{M,\lambda}$ corresponding to each $W_\lambda$ as defined in equation \eqref{SWdecomp}. These mixer Hamiltonians satisfy the conditions appearing in Theorem \ref{MainObs} and maintain the decomposition described in \eqref{SWdecomp}. The explicit descriptions of both the mixer Hamiltonian and the vector spanning its one-dimensional lowest-energy subspace can be found in Theorem \ref{MainThm}.

\section{QAOA review}
In this paper, by a classical optimization problem (COP) we will understand
a maximization or minimization of $F$ from Equation (\ref{eq:optfunc}). 
The Quantum Approximate Optimization Algorithm (QAOA), initially proposed in \cite{QAOA} is a hybrid quantum-classical algorithm that combines a parameterized quantum evolution with a classical parameter optimization. The outcome of such algorithm is an (approximate) solution to optimization problem as above.

\subsection{Generalities} We initiate our discussion with an overview of the widely adopted QAOA implementation. Let $W$ denote a $d^n$-dimensional vector space, possessing a basis $\{v_x\}_{x\in \mathbb{D}^n}$, where the indexing is defined over the elements of $\mathbb{D}^n$.

\begin{defn}
A Hamiltonian $H_F$ is said to \textbf{represent} a function $F:\mathcal{S}\rightarrow \mathbb{R}$ if it satisfies $H_F(v_{x})=F(x)v_x$ for any $x\in\mathcal{S}$.
\end{defn}
\begin{rmk}
The set of conditions $H_F(v_{x})=F(x)v_x$ for any $x\in\mathcal{S}$ and linearity of $H_F$ imply that the Hamiltonian representing $F$ is unique.
\label{uniqueHam}
\end{rmk}

The steps of the QAOA approach are as follows:

\begin{enumerate}
\item[\textbf{Step 1.}] Define a mixer Hamiltonian $H_M$.
\item[\textbf{Step 2.}] Construct the quantum circuits representing $e^{-i\beta H_M}$ and $e^{-i\gamma H_P}$ with $\beta \in [0,\pi)$ and $\gamma\in [0,2\pi)$. These are known as \textit{problem} (or cost) and \textit{mixer} operators, respectively.
\item[\textbf{Step 3.}] Choose a number $p\in \mathbb{Z}_{\geq 1}$ (the QAOA depth)  and build the circuit consisting of repeated application of the $\{e^{-i\beta_j H_M}e^{-i\gamma_j H_C}\}_{j=1,\hdots, p}$ evolution.
\item[\textbf{Step 4.}] Prepare an initial state $|\xi\rangle$ (which has minimal energy w.r.t. the mixer Hamiltonian).

\item[\textbf{Step 5.}] Optimize the array of parameters $[\beta_1,\gamma_1,\hdots,\beta_p,\gamma_p]$ using classical computer. Repeat steps $1$-$5$.

\item[\textbf{Step 6.}] After the circuit has been optimized, measurements of the output state in the standard basis reveal approximate solutions to the optimization problem.
\end{enumerate}

\subsection{Mixer Hamiltonians}
We would like to discuss mixer Hamiltonians in more detail. A mixer Hamiltonian can be treated as a starting operator, which is continuously transformed into the problem Hamiltonian. The  transformation  is produced by the family of operators $tH_P+(1-t)H_M$ for $0\leq t\leq 1$, i.e. a linear interpolation between $H_M$ and $H_P$. It is natural to see which linear operators can serve as mixer Hamiltonians for a given optimization problem. In other words, we would like to know the technical requirements on the mixer Hamiltonian, so that the corresponding QAOA converges to a classical solution as the number of iterations increases.  
\begin{defn}
An $n\times n$ matrix $M$ with real coefficients is called \textbf{irreducible} iff there are no proper $M$-invariant coordinate subspaces of $\mathbb{C}^n$: there is no $0\neq W\subset \mathbb{C}^n$ with $W$ a coordinate subspace and  $M(W)\subseteq W$.
\end{defn}

\begin{thm} (Perron-Frobenius). Let $M=(m_{ij})\in \mbox{Mat}_n(\mathbb{C})$ be an irreducible matrix with $m_{ij}\geq 0$. Then there is a positive real number $r$, such that $r$ is an eigenvalue of $M$ and any other eigenvalue $\lambda$ (possibly complex) has $\operatorname{Re}(\lambda)<r$.
\label{PF}
\end{thm}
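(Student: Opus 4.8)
The statement to prove is the Perron--Frobenius theorem for irreducible entrywise-nonnegative matrices (implicitly excluding the degenerate case $M=0$, which is ``irreducible'' only when $n=1$ and then has no positive eigenvalue). The plan is the classical one: first manufacture a \emph{nonnegative} eigenpair by a compactness/fixed-point argument, then upgrade nonnegativity to \emph{strict} positivity using irreducibility, and finally use the strictly positive left and right eigenvectors to dominate every other eigenvalue in modulus, from which the bound on $\operatorname{Re}(\lambda)$ is immediate.

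\textbf{Step 1 (a positive eigenvalue exists).} Let $\Delta=\{x\in\mathbb{R}^n: x_i\ge 0,\ \sum_i x_i=1\}$. First I would check $Mx\ne 0$ for every $x\in\Delta$: if $Mx=0$, then since all $m_{ij},x_j\ge 0$ each term $m_{ij}x_j$ vanishes, so every column of $M$ indexed by $S:=\operatorname{supp}(x)$ is zero, hence $\operatorname{span}\{e_j:j\in S\}$ is a nonzero $M$-invariant coordinate subspace, necessarily proper (else $M=0$), contradicting irreducibility. Therefore $f(x):=Mx/\|Mx\|_1$ is a continuous self-map of the compact convex set $\Delta$, and Brouwer's fixed point theorem yields $v\in\Delta$ with $Mv=rv$ where $r=\|Mv\|_1>0$. (An alternative to Brouwer is to maximize the Collatz--Wielandt functional $x\mapsto\min_{x_i>0}(Mx)_i/x_i$ over $\Delta$ and verify the maximizer is an eigenvector, but the topological content is the same.)

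\textbf{Step 2 (the Perron vector is strictly positive).} Given $v\ge 0$, $v\ne 0$, $Mv=rv$, $r>0$, set $S=\operatorname{supp}(v)$. For $i\notin S$ we get $0=rv_i=\sum_{j\in S}m_{ij}v_j$, forcing $m_{ij}=0$ for all $i\notin S$, $j\in S$; then $\operatorname{span}\{e_j:j\in S\}$ is $M$-invariant, so irreducibility forces $S=\{1,\dots,n\}$, i.e.\ $v_i>0$ for all $i$. Since a coordinate subspace $W$ is $M$-invariant iff $W^\perp$ is $M^\top$-invariant, the matrix $M^\top$ is also nonnegative and irreducible, so the same argument gives a strictly positive left eigenvector $u$ with $u^\top M=s\,u^\top$, $s>0$; and $s(u^\top v)=u^\top Mv=r(u^\top v)$ with $u^\top v>0$ forces $s=r$.

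\textbf{Step 3 (domination and conclusion).} Let $\lambda\in\mathbb{C}$ be any eigenvalue with eigenvector $w\ne 0$. Taking coordinatewise absolute values in $\lambda w_i=\sum_j m_{ij}w_j$ and using the triangle inequality with $m_{ij}\ge 0$ gives $|\lambda|\,|w|\le M|w|$ entrywise, where $|w|=(|w_1|,\dots,|w_n|)^\top\ne 0$. Pairing against the positive vector $u$ yields $|\lambda|\,(u^\top|w|)\le u^\top M|w|=r\,(u^\top|w|)$ with $u^\top|w|>0$, hence $|\lambda|\le r$. Finally, if $\lambda\ne r$ then $\operatorname{Re}(\lambda)\le|\lambda|\le r$, and $\operatorname{Re}(\lambda)=r$ would require $\operatorname{Re}(\lambda)=|\lambda|$ (so $\lambda$ is real and $\ge 0$) and $|\lambda|=r$, i.e.\ $\lambda=r$, a contradiction; thus $\operatorname{Re}(\lambda)<r$, as claimed.

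The main obstacle is Step~1: every route to a first nonnegative eigenvector needs a genuine topological/compactness input, whereas Steps~2--3 are bookkeeping with supports of nonnegative vectors and the triangle inequality. The essential use of irreducibility is precisely to exclude ``block'' structure and thereby promote both $v$ and $u$ to strictly positive vectors, which is exactly what makes the pairing estimate in Step~3 give a \emph{strict} inequality after the trivial case analysis. (If one additionally wanted $r$ to be a simple eigenvalue --- relevant to uniqueness of the mixer ground state in the QAOA application --- a short supplementary argument using strict positivity of $u$ and $v$ does it, but this is not required for the statement as phrased.)
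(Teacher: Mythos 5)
The paper does not actually prove Theorem \ref{PF}: it is quoted as a classical result (with the relevance to QAOA convergence delegated to the reference \cite{BKZS}), and the Appendix only supplies proofs of Theorems \ref{HamThm}, \ref{MainThm} and Corollary \ref{MinSolCor}. So there is no in-paper argument to compare against; what you have written is a self-contained proof of the cited classical fact. On its own terms your argument is correct and complete for the statement as phrased. The three-step structure (Brouwer or Collatz--Wielandt to produce a nonnegative eigenpair with $r>0$; irreducibility via the support argument to upgrade the right and left eigenvectors to strictly positive ones, using that $M^\top$ inherits irreducibility because coordinate subspaces pair off under orthogonal complementation; the triangle-inequality domination $|\lambda|\,|w|\le M|w|$ paired against $u>0$ to get $|\lambda|\le r$) is the standard one, and your final case analysis correctly converts the non-strict bound $|\lambda|\le r$ into the strict bound $\operatorname{Re}(\lambda)<r$ for $\lambda\ne r$, since $\operatorname{Re}(\lambda)=r$ would force $\lambda$ to be real, nonnegative and of modulus $r$, hence equal to $r$. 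This is exactly the right way to handle the imprimitive case, where eigenvalues of modulus $r$ other than $r$ itself do occur. Your explicit exclusion of the $1\times 1$ zero matrix is also the right caveat, as the paper's definition of irreducibility is vacuous there. The only remark worth adding is that the QAOA application elsewhere in the paper also uses simplicity of $r$ (uniqueness of the mixer ground state), which you correctly flag as requiring a short supplementary argument not demanded by the statement itself.
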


The properties of mixer Hamiltonian sufficient for convergence of corresponding QAOA are exactly those appearing in Theorem \ref{PF}  (see \cite{BKZS}).

\section{Representation theory overview}
The main subject of representation theory is to find  the ways in which a given group $G$ may be embedded in a linear group $GL(V)$. The vector space $V$ can be of finite or infinite dimension, and it may be defined over a field of characteristic zero or a finite field. However, in the context relevant to this article all vector spaces will be finite dimensional and over the field of complex numbers, $\mathbb{C}$. The overview below is very brief, yet contains all results required for understanding the material in subsequent sections. A more detailed and rigorous exposition on the subject can be found in Chapter $4$ in \cite{FH}, Section $4$ in \cite{Et}. 


\begin{defn}
The following are basic definitions in the representation theory:
\begin{itemize}
\item A \textbf{representation}  of a group $G$ (or a $G$\textbf{-module}) on a finite-dimensional complex vector space $V$ is a homomorphism $\rho: G\rightarrow GL(V)$ of $G$ to the group of \textit{automorphisms} of $V$ (invertible linear operator on $V$).
    \item A map $\varphi$ between two $G$-representations ($G$\textbf{-module homomorphism}) $(\rho_1,V)$ and $(\rho_2,W)$ is a vector space map
$\varphi: V\rightarrow W$, such that the diagram

\[\begin{tikzcd}
	{V} && {W} \\
	\\
	{V} && {W}
	\arrow["{\rho_1(g)}"', from=1-1, to=3-1]
	\arrow["\varphi"', from=3-1, to=3-3]
	\arrow["{\rho_2(g)}", from=1-3, to=3-3]
	\arrow["\varphi", from=1-1, to=1-3]
\end{tikzcd}\]
commutes for every $g \in G$.
\item A \textbf{subrepresentation} of a representation $(\rho, V)$ is a vector subspace $W\subset V$, which is invariant under the action of $G$. 
\item A representation $V$ is called \textbf{irreducible} if there is no proper nonzero $G$-invariant subspace $W \subset V$.
\end{itemize}

\end{defn}

Let $V$ and $W$ be representations of a group $G$. Then the direct sum $V\oplus W$ and tensor product
$V\otimes W$ are $G-$representations as well: $g(v+w)=g(v)+g(w)$ and $g(v\otimes w)=g(v)\otimes g(w)$ for any $v\in V, w \in W$ and $g\in G$.

\begin{defn}
    Two elements $a,b \in G$ are \textbf{conjugate} if there is an element $g\in G$ such that $b=gag^{-1}$. This is an equivalence relation whose equivalence classes are called \textbf{conjugacy classes}.
\end{defn}

The following results can be found as Proposition $2.30$ and Proposition $1.8$ in \cite{FH}.

\begin{prop} Let $G$ be a finite group.
\begin{enumerate}
\item Irreducible representations are in bijection with conjugacy classes in $G$.
    \item  Every finite-dimensional representation $V$ of $G$ over a field of characteristic zero  (in particular, over $\mathbb{C}$) is completely reducible: $V\simeq\bigoplus V_i^{\oplus m_i}$, where each $V_i$ is irreducible.
\end{enumerate}
 \label{basicFacts}     
\end{prop}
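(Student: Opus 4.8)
The plan is to treat the two assertions by different means: part (2) is Maschke's theorem, proved by an averaging trick over $G$, while part (1) is the classical character count, which rests on the orthogonality relations for characters and hence on part (2) together with Schur's Lemma.

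For part (2) I would first reduce complete reducibility to the single claim that every subrepresentation $W\subseteq V$ admits a $G$-invariant complement. Granting that, one inducts on $\dim V$: if $V$ is irreducible there is nothing to prove; otherwise pick a proper nonzero subrepresentation, split off an invariant complement, and apply the inductive hypothesis to each summand. To produce the complement, start from an arbitrary vector-space projection $p\colon V\to W$ and symmetrize it,
\[
\widetilde{p}\;=\;\frac{1}{|G|}\sum_{g\in G}\rho(g)\,p\,\rho(g)^{-1}.
\]
Since $W$ is $G$-stable, $\widetilde p$ still maps $V$ onto $W$ and restricts to the identity on $W$, so it is again a projection onto $W$; a direct computation gives $\widetilde p\,\rho(h)=\rho(h)\,\widetilde p$ for all $h\in G$, whence $\ker\widetilde p$ is a subrepresentation and $V=W\oplus\ker\widetilde p$. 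The characteristic-zero hypothesis enters only through the division by $|G|$.

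For part (1) I would pass to characters, $\chi_V(g)=\operatorname{tr}\rho(g)$, which are class functions on $G$. Two facts must be assembled: (i) \emph{orthonormality}, $\langle\chi_{V_i},\chi_{V_j}\rangle=\delta_{ij}$ for pairwise non-isomorphic irreducibles, which forces the irreducible characters to be linearly independent in the space $C(G)$ of class functions; and (ii) \emph{completeness}, that the irreducible characters span $C(G)$, equivalently that no nonzero class function is orthogonal to all of them. For (ii) one uses that a class function $\phi$ defines a $G$-equivariant operator on any representation by $v\mapsto\sum_{g}\phi(g)\rho(g)v$; by Schur's Lemma this operator is a scalar on each irreducible, computable from $\langle\phi,\chi_{V_i}\rangle$, and feeding in the regular representation (which by part (2) contains every irreducible) forces $\phi=0$. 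Now $C(G)$ has as an evident basis the indicator functions of the conjugacy classes, so $\dim C(G)$ equals the number of conjugacy classes, and by (i)--(ii) it also equals the number of isomorphism classes of irreducible representations; this gives the asserted bijection — a bijection of finite sets by equal cardinality, not a canonical one.

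The main obstacle is step (ii): unlike the averaging argument for part (2) and the essentially formal orthogonality relation (i), the completeness of the irreducible characters genuinely requires Schur's Lemma and the isotypic decomposition of the regular representation. Everything else is bookkeeping. Alternatively, both statements may simply be cited from Chapters~1--2 of \cite{FH} or Section~4 of \cite{Et}, where they appear essentially verbatim.
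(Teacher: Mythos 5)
The paper does not prove this proposition at all: it is quoted verbatim from Fulton--Harris (Propositions 2.30 and 1.8), exactly as you suggest in your closing sentence. Your written-out argument is the standard textbook proof and is correct: part (2) is Maschke's theorem via the averaging of a projection, and part (1) is the character count via orthonormality plus completeness of irreducible characters, with the completeness step handled correctly through the convolution operator and the regular representation. Two small points worth flagging. First, your proof of part (1) uses Schur's Lemma in the form ``every $G$-endomorphism of an irreducible is a scalar,'' which requires the ground field to be algebraically closed; over a general field of characteristic zero the statement of part (1) is false (e.g.\ $\mathbb{Z}/3\mathbb{Z}$ has two irreducible representations over $\mathbb{R}$ but three conjugacy classes). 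Since the paper fixes $\mathbb{C}$ as the ground field throughout, this is consistent with the intended reading, but you should state explicitly that algebraic closure is being used in part (1) while only $\operatorname{char}=0$ (or more precisely $|G|$ invertible) is needed for part (2). Second, your remark that the bijection is merely an equality of cardinalities rather than a canonical correspondence is accurate and is a point the paper's own phrasing (``in natural bijection'') glosses over.
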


We will need one more important fact (see Lemma $1.7$ in \cite{FH}).

\begin{sch}
Let $V$ and $W$ be irreducible representations of $G$ and $\varphi: V\rightarrow W$ a $G$-module homomorphism, then
\begin{enumerate}
    \item $\varphi$ is an isomorphism or zero;
    \item in the former case $\varphi=\lambda\cdot I$ for some $\lambda\in\mathbb{C}$.
\end{enumerate}
\label{Schur}
\end{sch}

\section{QAOA dimensionality reduction}
The symmetric group $\mathcal{S}$ acts on  the set of $n$ qudits $\mathbb{D}^n:=\{0,1,\hdots,d-1\}^{n}$ by permutations. This action can be extended to a linear action on the vector space  $W:=\underset{s\in \mathbb{D}^n}{\bigoplus}\mathbb{C}\langle s\rangle$. Said differently, there is a homomorphism $\varphi: \mathcal{S} \rightarrow GL(W)$.  

 Let 
 \[
 G:=\{g\in SU(W)~|~[g,H_M]=[g,H_P]=0\}
 \]
 be the subgroup of elements whose action on $V$ commutes with that of both Hamiltonians $H_M$ and $H_P$. Then 
\begin{equation}
(e^{-i\beta H_M}e^{-i\beta H_P})^p(\varphi(g)v)=\varphi(g)((e^{-i\beta H_M}e^{-i\beta H_P})^p(v))
\label{commutation}
\end{equation} 
for any $v\in W$. Let
\[
\widetilde{G}:=\{g\in \mathcal{S}~|~[g,H_M]=[g,H_P]=0\}\subseteq \mathcal{S}
\]
be the subgroup acting on the set $\mathbb{D}^n$ and commuting with $H_M$ and $H_P$. In other words, $\widetilde{G}$ is the 'refinement' of $G$  after transition from the vector space $V_{W}$ back to the underlying set $\mathbb{D}^n$ indexing its basis. It is natural to refer to the groups $G$ and $\widetilde{G}$ as the \textit{quantum} and \textit{classical symmetries} of the problem Hamiltonian $H_P$.  We first turn our attention to the action of the group of classical symmetries. The set $\mathbb{D}^n$  can be written as a disjoint union of $\widetilde{G}$-orbits: $\mathbb{D}^n=\overset{m}{\underset{j=1}{\bigsqcup}} \mathcal{O}_j$. Set $\xi_j:=\frac{1}{\sqrt{|\mathcal{O}_j|}}\sum\limits_{s\in \mathcal{O}_j}|s\rangle$ and observe that $\xi_j$ is $\widetilde{G}$-invariant (each element $g\in\widetilde{G}$ permutes the summands of $\sum\limits_{s\in \mathcal{O}_j}|s\rangle$). Moreover, the subspace of $\widetilde{G}$-invariants, 
\[
W^{\widetilde{G}}:=\{w\in W~|~g\cdot w=w ~\forall g\in \widetilde{G}\},
\]
is spanned by the vectors $\xi_1,\hdots,\xi_m$. As the subspace $(W)^{\widetilde{G}}$ is preserved by $H_M, H_P$ and, hence, $(e^{-i\beta H_M}e^{-i\beta H_P})^p$ for any $p\in Z_{> 0}$, we get that each vector 
\[
(e^{-i\beta H_M}e^{-i\beta H_P})^p(|\xi\rangle)=\sum\limits_{j=1}^{m}\alpha_j \xi_j
\]
is inside $ W^{\widetilde{G}}$ as well. Notice that the vector $\xi=\frac{1}{|\mathbb{D}^n|}\sum\limits_{s\in \mathbb{D}^n} |s\rangle$, which can be expressed as 
\[
\xi=\frac{1}{\sqrt{|\mathbb{D}^n|}}\sum\limits_{j=1}^{m} (|\mathcal{O}_j|\cdot|\xi_j\rangle),
\]
is invariant under the action of  $\mathcal{S}$.

\begin{rmk}
One can also describe $\widetilde{G}$ simply as the intersection of the groups $G$ and $\mathcal{S}$.
\end{rmk}


\begin{ex}
If the set $\mathbb{D}^n$ forms a single $\widetilde{G}$-orbit, all the eigenvalues of $H_P$ are equal. Therefore, the problem Hamiltonian is constant, and so is the objective function. For instance, this happens when $\widetilde{G}=\mathcal{S}$ or $\widetilde{G}$ contains a long cycle in $\mathcal{S}$.
\end{ex}

The homomorphism $\varphi: \widetilde{G}\rightarrow GL(W)$ makes $W$ into a $\widetilde{G}$-representation. As $\widetilde{G}$ is a finite group, every representation of $\widetilde{G}$ over a field of characteristic zero  (in particular, over $\mathbb{C}$) is completely reducible (see Proposition \ref{basicFacts}). The latter implies that $W\simeq\bigoplus V_i^{\oplus m_i}$, where each $V_i$ is irreducible. As the vector $\xi$ is $\widetilde{G}$-invariant, the one-dimensional space $\mathbb{C}\langle\xi\rangle$ is a trivial representation of $\widetilde{G}$. 

The key observation is that since the actions of $H_M$ and $H_P$ on $W$ commute with the action of $\widetilde{G}$, we get that $H_M$ and $H_P$ are not just linear maps, but  homomorphisms of $\widetilde{G}$-modules.

 It follows from Lemma \ref{Schur} that $H_M$ and $H_P$ can map irreducible representations only to irreducible representations of the same type. In particular, the image of $\xi$ under $H_M$ or $H_P$ and, therefore, $(e^{-i\beta H_M}e^{-i\beta H_p})^p$, must be inside $V_{0}^{\oplus m_{0}}\subseteq W$, where $W_0$ is the trivial representation of $\widetilde{G}$. More generally, we make the following observation.
\begin{rmk}
   Let $V_j$ be an irreducible representation of $\widetilde{G}$ that appears in decomposition \begin{equation}
    W = \bigoplus_j W_j,
    \label{generalDecomp2}
\end{equation} where $W_j:=V_j^{\oplus m_j}$. Then the image of $\xi$ under $(e^{-i\beta H_M}e^{-i\beta H_P})^k$, must be inside $W_j\subseteq W$ for any $k>0$.
\end{rmk}

The following results (reformulations thereof) appeared in \cite{SHHS}. 

\begin{prop}The following statements hold.
\begin{enumerate}
    \item The QAOA output probabilities are the same across all elements in the same $\widetilde{G}$-orbit.
    \item The number of $\widetilde{G}$-orbits on  $\mathbb{D}^n$ is 
\[
m=\dim(W^{\widetilde{G}})=\sum\limits_{x\in \mathbb{D}^n}\frac{1}{\widetilde{G}\cdot x}.
\]
\end{enumerate}
\label{Amplitudes}
\end{prop}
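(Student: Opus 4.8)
The plan is to treat the two assertions separately; both follow from the $\widetilde{G}$-equivariance already set up above, together with elementary orbit counting.

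\textbf{Part (1).} First I would observe that, since $H_M$ and $H_P$ commute with $\varphi(g)$ for every $g\in\widetilde{G}$, so do the exponentials $e^{-i\beta_j H_M}$ and $e^{-i\gamma_j H_P}$ (expand each as a power series in $H_M$, resp. $H_P$). Hence the full QAOA propagator $\mathcal{G}$ of \eqref{qaoa-chain} satisfies $\varphi(g)\,\mathcal{G}=\mathcal{G}\,\varphi(g)$ for all $g\in\widetilde{G}$. Because the initial state $|\xi\rangle$ is $\mathcal{S}$-invariant, in particular $\widetilde{G}$-invariant, the output state $|\psi\rangle:=\mathcal{G}|\xi\rangle$ is $\widetilde{G}$-invariant: $\varphi(g)|\psi\rangle=\mathcal{G}\,\varphi(g)|\xi\rangle=\mathcal{G}|\xi\rangle=|\psi\rangle$. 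Next I would use that each $\varphi(g)$ permutes the orthonormal basis $\{v_x\}_{x\in\mathbb{D}^n}$, hence is unitary with $\varphi(g)^{\dagger}=\varphi(g)^{-1}=\varphi(g^{-1})$. The probability that a standard-basis measurement returns $x$ is $p(x)=|\langle v_x|\psi\rangle|^2$, and for any $g\in\widetilde{G}$,
\[
p(g(x))=\bigl|\langle v_{g(x)}|\psi\rangle\bigr|^2=\bigl|\langle \varphi(g)v_{x}|\psi\rangle\bigr|^2=\bigl|\langle v_{x}|\varphi(g^{-1})\psi\rangle\bigr|^2=\bigl|\langle v_{x}|\psi\rangle\bigr|^2=p(x).
\]
Since the elements of the $\widetilde{G}$-orbit of $x$ are exactly the $g(x)$, the output probabilities are constant on each orbit.

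\textbf{Part (2).} Here I would establish the chain of equalities $\#\{\text{orbits}\}=\sum_{x\in\mathbb{D}^n}\tfrac{1}{|\widetilde{G}\cdot x|}=\dim W^{\widetilde{G}}$ in two steps. Writing $\mathbb{D}^n=\bigsqcup_{j=1}^{m}\mathcal{O}_j$, the first equality is the standard double count: every $x\in\mathcal{O}_j$ has $|\widetilde{G}\cdot x|=|\mathcal{O}_j|$, so $\sum_{x\in\mathcal{O}_j}\tfrac{1}{|\widetilde{G}\cdot x|}=1$, and summing over $j$ gives $m$. For the second equality I would show $\{\xi_1,\dots,\xi_m\}$, with $\xi_j=\tfrac{1}{\sqrt{|\mathcal{O}_j|}}\sum_{s\in\mathcal{O}_j}v_s$, is a basis of $W^{\widetilde{G}}$: each $\xi_j$ lies in $W^{\widetilde{G}}$ (as already noted, $g$ merely permutes the summands), they are linearly independent because their supports are pairwise disjoint, and they span because any $w=\sum_x c_x v_x$ with $\varphi(g)w=w$ for all $g$ forces $c_{g(x)}=c_x$, i.e. $c_{\bullet}$ is constant on orbits, so $w$ is a linear combination of the $\sum_{s\in\mathcal{O}_j}v_s$. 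Thus $\dim W^{\widetilde{G}}=m$. If desired, one can additionally remark that $\dim W^{\widetilde{G}}=\operatorname{tr}\bigl(\tfrac{1}{|\widetilde{G}|}\sum_{g\in\widetilde{G}}\varphi(g)\bigr)=\tfrac{1}{|\widetilde{G}|}\sum_{g}|\mathrm{Fix}(g)|$, recovering the count via Burnside's lemma.

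I do not expect a genuine obstacle: part (1) is an equivariance argument plus the unitarity of the permutation action, and part (2) is the orbit-counting identity together with an explicit description of the invariant subspace. The only places needing a touch of care are spelling out that the exponentials inherit $\widetilde{G}$-equivariance from $H_M$ and $H_P$, and noting that $\varphi(g)$ is unitary for the standard inner product so that measurement probabilities are genuinely preserved.
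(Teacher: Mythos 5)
Your proof is correct. The paper does not actually write out a proof of this proposition (it attributes the result to \cite{SHHS}), but the ingredients it assembles in Section 5 --- the commutation relation \eqref{commutation}, the unitarity of the permutation action, and the observation that $\xi_1,\hdots,\xi_m$ span $W^{\widetilde{G}}$ --- are precisely the ones you use, so your argument is essentially the intended one.
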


The preceding discussion establishes that the classical QAOA, prior to its final measurement, unfolds within the Hilbert subspace $W_0\subseteq W$. This is  a special case within a broader, more general result.

\begin{MT}
Consider a Constrained Optimization Problem (COP) with problem Hamiltonian $H_P$. Let $K$ be a group of symmetries of $H_P$ (not necessarily the group of all symmetries), and let $W = \underset{i}{\bigoplus} W_i$ be the decomposition of $W$ into irreducible representations of $K$ with multiplicities. If there exists a mixer Hamiltonian $H_{M,i}$ that meets the following criteria: 
\begin{itemize}
\item $H_{M,i}$ satisfies the assumptions of Theorem \ref{PF}
\item the one-dimensional lowest energy eigenspace of $H_{M,i}$ is inside  $W_i$ and
\item $H_{M,i}$ preserves the direct sum decomposition of $W$ in \eqref{generalDecomp2},
\end{itemize}
then one can establish a reduced Quantum Approximate Optimization Algorithm (QAOA) with the same problem Hamiltonian $H_P$, mixer Hamiltonian $H_{M,i}$ and initial state $\xi_i$. Thus defined QAOA has an ambient Hilbert space $W_i$.
\label{MainObs}
\end{MT}

\begin{proof}The initial two assumptions align precisely with those in Theorem \ref{PF}, affirming the suitability of the operator $H_{M,i}$ as a mixer Hamiltonian. The final condition guarantees that in the restricted QAOA, employing the problem Hamiltonian $H_P$, mixer $H_{M,i}$ and initial state $\xi_i$, all subsequent transformations of the initial vector are constrained within the subspace $W_i\subseteq W$ prior to the measurement of the resulting state. It is worth noting that the preservation of $W_i$ under the action of the (exponents of)  problem Hamiltonian is a direct consequence of Lemma \ref{Schur}. 
\end{proof}

\section{Group of symmetries for problem Hamiltonians}

In this section, we explore the groups of quantum symmetries. The Quadratic Unconstrained Optimization (QUO) is a subclass of COPs focused on finding maxima or minima of quadratic functions on the set of $d$-ary strings of length $n$. In other words, the problem is to find the element(s) $x^* \in \mathbb{D}^n$ which minimizes (or maximizes) the objective function of the form 
\begin{equation}
    F = a + \sum\limits_{k=1}^n \beta_k x_k + \sum\limits_{1 \leq i \leq j \leq n} \alpha_{ij}x_ix_j
\label{quadrobjfn}
\end{equation}
with $a, \alpha_{ij}, \beta_k \in \mathbb{R}$.

The goal of this section is to determine the group of quantum symmetries for problem Hamiltonians representing such objective functions. We recommend \cite{Had} for material on QUBOs (Quadratic Unconstrained Binary Optimization, i.e., a particular case of QUO with $\mathbb{D} = \{0,1\}$), corresponding problem Hamiltonians, and reformulations of the former as QAOA problems.

The problem Hamiltonian representing a function $F$ as in \eqref{quadrobjfn}, arising from a QUO problem, takes the form:
\[
H_P = a \cdot I + \sum\limits_{k=1}^n \beta_k \cdot \frac{I-Z_k}{2} + \sum\limits_{1 \leq i \leq j \leq n} \alpha_{ij} \cdot \frac{(I-Z_i)(I-Z_j)}{4},
\]
which can be simplified into:
\[\begin{aligned}
    & H_P = \left(a + \frac{1}{2} \sum\limits_{k=1}^n \beta_k + \frac{1}{4} \sum\limits_{1 \leq i \leq j \leq n}\alpha_{ij}\right) \cdot I +\\
    &+\sum\limits_{k=1}^n \left(\frac{1}{2}\beta_k + \frac{1}{4}\sum\limits_{i\neq k}\alpha_{i,k}\right) Z_k + \sum\limits_{1 \leq i \leq j \leq n}\alpha_{ij}Z_iZ_j,
\end{aligned} \]
and, after disregarding the scalar term, becomes:
\[
H_P=\sum_{k=1}^n \widetilde{\beta}_kZ_k + \sum_{i,j} \widetilde{\alpha}_{i,j}Z_iZ_j.
\]

\begin{thm}
The Hamiltonian  $H_P=\sum\limits_{k}\widetilde{\beta}_kZ_k+\sum\limits_{i,j}\widetilde{\alpha}_{i,j}Z_iZ_j$ has group of symmetries \[
G=\left(\begin{array}{ c|c|c|c }
    M_1 & 0 & \hdots & 0 \\
    \hline
    0 & M_2 & \ddots & 0 \\
    \hline
    \vdots & \ddots & \ddots  & \vdots\\
    \hline
    0 & 0 & \hdots & M_t
  \end{array}\right)\subseteq SU_N(\mathbb{C}),
  \]
  where the blocks correspond to distinct eigenvalues of $H_P$. Specifically, $t$ represents the count of unique eigenvalues, and the size of the matrix block $M_i$ matches the multiplicity of the eigenvalue  $\lambda_i$. 
  \label{HamThm}
\end{thm}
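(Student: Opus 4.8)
The plan is to characterize $G$ as the group of all unitaries commuting with $H_P$, and to observe that this only depends on the spectral decomposition of $H_P$, not on its specific presentation in terms of the $Z_k$ and $Z_iZ_j$. First I would note that $H_P$, being a real linear combination of tensor products of $Z$-operators (which are all diagonal in the standard basis $\{v_x\}$), is itself diagonal in that basis; hence it is Hermitian and has real eigenvalues. So one can list its distinct eigenvalues $\lambda_1, \dots, \lambda_t$ with multiplicities $n_1, \dots, n_t$ (where $\sum_i n_i = N = d^n$), and write $W = \bigoplus_{i=1}^t E_i$ where $E_i = \ker(H_P - \lambda_i I)$ is the eigenspace, a coordinate subspace of dimension $n_i$ spanned by those basis vectors $v_x$ with $F(x) = \lambda_i$ (up to the discarded scalar shift).

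Next I would prove the two inclusions showing $G = \{g \in SU_N(\mathbb{C}) : [g, H_P] = 0\}$ equals the stated block-diagonal group. For $\supseteq$: any $g$ preserving each eigenspace $E_i$ — i.e.\ block diagonal with blocks $M_i$ acting on $E_i$ — manifestly commutes with $H_P$ since $H_P$ acts as the scalar $\lambda_i$ on $E_i$; the determinant-one condition is just $\prod_i \det(M_i) = 1$, and each $M_i$ must be unitary for $g$ to be unitary. For $\subseteq$: the standard spectral-theory fact that if $[g, H_P] = 0$ then $g$ preserves every eigenspace of $H_P$ — because for $v \in E_i$, $H_P(gv) = g(H_P v) = \lambda_i (gv)$, so $gv \in E_i$. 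This puts $g$ in block-diagonal form with respect to the decomposition $W = \bigoplus E_i$, and since the $E_i$ are coordinate subspaces, this is literally block-diagonal in the standard basis as displayed. One should also remark that $[g, H_M]$ plays no role here once we ask only for symmetries of $H_P$ (or, equivalently, cite the conventions so the reader sees $G$ here means symmetries of $H_P$).

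I would also want to address why the ordering of blocks and the basis are chosen so that the matrix literally looks block-diagonal: this requires ordering the standard basis so that vectors in the same eigenspace are grouped consecutively, which is a harmless relabeling; I'd state this explicitly so the displayed matrix form is justified rather than just "block-diagonal up to conjugation by a permutation." Finally, I would double-check the reduction from the original $H_P$ (with the scalar term $a\cdot I$ and the specific coefficients $\widetilde\beta_k, \widetilde\alpha_{i,j}$) to the claim: adding a scalar multiple of $I$ shifts all eigenvalues by the same constant and changes nothing about eigenspaces or the commutant, so the group of symmetries is unaffected — worth one sentence.

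The main obstacle is essentially expository rather than mathematical: the hard part will be being careful that "group of symmetries" in the theorem statement refers to the commutant of $H_P$ inside $SU_N(\mathbb{C})$ (the quantum symmetries $G$ defined earlier), and making the passage from the abstract decomposition $W = \bigoplus E_i$ into eigenspaces to the concrete block-diagonal matrix picture fully rigorous — i.e.\ pinning down that the eigenspaces of a diagonal operator are coordinate subspaces and that a block-diagonal unitary has the stated determinant constraint. The genuinely nontrivial content is almost nil; it is the spectral theorem plus the observation that $H_P$ is diagonal, so I would keep the proof short and emphasize the "it only depends on the spectrum" point, which is also what makes Corollary \ref{HamThmCor} (the $SU_d(\mathbb{C})$ remark) fall out immediately.
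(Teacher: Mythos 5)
Your proposal is correct and follows essentially the same route as the paper: observe that $H_P$ is diagonal in the standard basis, group the basis vectors by eigenvalue so that $H_P$ acts as a scalar on each block, and conclude that the commutant inside $SU_N(\mathbb{C})$ is exactly the block-diagonal unitaries. The paper additionally writes out the explicit eigenvalue $\lambda_b$ in terms of the coefficients $\widetilde{\alpha}_{ij},\widetilde{\beta}_k$ and is terser on the two inclusions, but the substance is identical.
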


\begin{cor}
\begin{enumerate}
    \item The group of symmetries $G$ always contains the subgroup of $SU_N(\mathbb{C})$, which consists of matrices that are diagonal in the standard basis. In particular, $G$ is always infinite.
    \item The matrix for action of a generic $g\in SU_d(\mathbb{C})$ on $W$ in the standard basis does not contain zero entries. Therefore, $SU_d(\mathbb{C})\not\subseteq G$, unless $H_P$ is constant.
\end{enumerate}
\label{HamThmCor}
\end{cor}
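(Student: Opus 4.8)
The plan is to derive both parts directly from the block-diagonal description of $G$ furnished by Theorem \ref{HamThm}, together with the explicit form of the $SU_d(\mathbb{C})$-action on $W = V_{\mathbb{D}}^{\otimes n}$. Throughout, $N = \dim W = d^n$, and $G$ is the group of symmetries of $H_P$ as in Theorem \ref{HamThm}.

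For part (1), I would first observe that $H_P = \sum_k \widetilde{\beta}_k Z_k + \sum_{i,j}\widetilde{\alpha}_{i,j}Z_iZ_j$ is diagonal in the standard basis $\{v_x\}_{x\in\mathbb{D}^n}$. Hence any unitary operator that is also diagonal in this basis commutes with $H_P$; equivalently, every diagonal unitary fits into the block-diagonal shape of Theorem \ref{HamThm}, since each diagonal entry may be placed in the block attached to the eigenvalue of the corresponding basis vector. The subgroup of $SU_N(\mathbb{C})$ consisting of diagonal matrices is a maximal torus of dimension $N-1 = d^n - 1 \geq 1$, hence infinite, which immediately gives $|G| = \infty$.

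For part (2), I would compute the matrix of $g^{\otimes n}$, the image of $g\in SU_d(\mathbb{C})$ under the action on $W$, in the standard basis: its $(x,y)$ entry equals $\prod_{k=1}^n g_{x_k,y_k}$, a product of entries of the $d\times d$ matrix $g$. Since $SU_d(\mathbb{C})$ contains matrices all of whose entries are nonzero (exhibiting one concretely, e.g. a suitable special-unitary rotation for small $d$, and noting that the nonvanishing locus is Zariski-open and nonempty in general), the set of $g$ for which every entry of $g^{\otimes n}$ is nonzero is a dense open subset of $SU_d(\mathbb{C})$; this proves the ``generic'' claim in the statement. Now suppose $H_P$ is non-constant, so it has $t \geq 2$ distinct eigenvalues; by Theorem \ref{HamThm} every element of $G$ is block-diagonal with at least two blocks, and in particular has a vanishing $(x,y)$ entry whenever $v_x$ and $v_y$ lie in different eigenspaces of $H_P$. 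Choosing such a pair $v_x, v_y$ and a generic $g$ as above yields $(g^{\otimes n})_{x,y}\neq 0$, so $g^{\otimes n}\notin G$, whence $SU_d(\mathbb{C})\not\subseteq G$. (When $H_P$ is constant, $t=1$, $G = SU_N(\mathbb{C})$, and the inclusion holds trivially.)

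The corollary is essentially a bookkeeping consequence of Theorem \ref{HamThm}, so there is no deep obstacle; the only point requiring mild care is the genericity argument — producing an explicit element of $SU_d(\mathbb{C})$ with no zero entries and observing that the set of such elements is Zariski-open and dense — which I expect to be the main (minor) technical step.
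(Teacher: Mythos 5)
Your proof is correct and follows exactly the route the paper intends: the corollary is stated as an immediate consequence of Theorem \ref{HamThm} (the paper gives no separate proof), and your argument---diagonal unitaries commute with the diagonal $H_P$ for part (1), and the entrywise formula $(g^{\otimes n})_{x,y}=\prod_k g_{x_k,y_k}$ together with the forced off-block zeros when $t\geq 2$ for part (2)---is precisely the bookkeeping needed to fill in that consequence. The genericity step (a nonempty Zariski-open set of $g\in SU_d(\mathbb{C})$ with all entries nonzero) is handled adequately.
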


\section{Representation theory of $S_n, SU_d(\mathbb{C})$ and Schur-Weyl duality}

In this section, we provide a brief overview of the key results concerning representations of the groups $S_n$ and $SU_d(\mathbb{C})$, as well as the decomposition of the tensor product $(\mathbb{C}^d)^{\otimes n}$ under the natural actions of these groups. This concept is known as 'Schur-Weyl duality'. For a more comprehensive and detailed treatment, readers are encouraged to refer to Chapters $4$, $15$, and $16$ in \cite{FH}, and also Sections $4.12$, $4.18$, and $4.19$ in \cite{Et}, along with the references provided therein.

The material presented in this section will be applied to derive results that are pertinent to QAOA in the following section. Further applications of representation theory in the context of QAOA (including simulations) and quantum machine learning can be found in \cite{Z1LLSK1}, \cite{Z1LLSK2} and \cite{K}.

\subsection{Basic facts on representations of $S_n$ and $SU_d(\mathbb{C})$}
\begin{defn}
A \textbf{partition} of a positive integer $n$  is a presentation of $n$ in the form  $n=\lambda_1+\lambda_2+\hdots+\lambda_s$, where $\lambda_i\in\mathbb{Z}_{>0}$, and $\lambda_i\geq \lambda_{i+1}$. We will use the notation $\lambda \vdash n$ for partition $\lambda=(\lambda_1,\hdots,\lambda_s)$.

To any $\lambda \vdash n$ we will attach a \textbf{Young diagram} $Y_\lambda$, which is the union of grid squares (boxes) $\{(i,-t)~|~1\leq i\leq s, 1\leq t\leq \lambda_i\}$ in the Cartesian plane. Clearly, $Y_\lambda$ is a collection of $n$ unit squares. 

A \textbf{Young tableau} $T_\lambda$ corresponding to diagram $Y_\lambda$ is obtained by placing numbers $1,\ldots,n$ in the squares of $Y_\lambda$.

A tableau is called \textbf{standard} (SYT) if the entries in each row and each column are strictly increasing and \textbf{semistandard} (SSYT) if the entries weakly increase along each row and strictly increase down each column.
\end{defn}

\begin{ex}
A standard and a semistandard tableaux of shape $\lambda=(3,2)$ appear on Figure \ref{Tableaux} below.
\begin{figure}[htbp!]
    \centering
    $\young(124,35)\hspace{0.3in} \young(113,25)$
    \caption{Tableaux of shape $\lambda=(3,2)$}
    \label{Tableaux}
\end{figure}
\end{ex}

\begin{thm}
Nonisomorphic irreducible representations of $S_n$ are in natural bijection with (enumerated by) partitions $\lambda\vdash n$. Irreducible $S_n$-representation corresponding to partition $\lambda$ will be denoted by $\mathbb{S}_\lambda$.
\end{thm}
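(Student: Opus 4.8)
The statement to be proved is the classification theorem: nonisomorphic irreducible representations of $S_n$ are in natural bijection with partitions $\lambda \vdash n$. The plan is to establish this in two halves — that there are exactly as many irreducibles as partitions (a counting argument), and that one can exhibit a distinct irreducible attached to each $\lambda$ (a construction). For the counting half, I would invoke Proposition \ref{basicFacts}(1), which already tells us that irreducible representations of a finite group are in bijection with its conjugacy classes; so it suffices to observe that conjugacy classes in $S_n$ are precisely the cycle types, and cycle types of permutations of $n$ elements are exactly partitions $\lambda \vdash n$. This reduces the ``right number'' part to the elementary fact that two permutations are conjugate in $S_n$ iff they have the same cycle decomposition shape.

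For the construction half — producing the representation $\mathbb{S}_\lambda$ — I would use the classical Young symmetrizer approach. Fix a Young tableau $T_\lambda$ of shape $\lambda$, let $P_\lambda \subseteq S_n$ be the subgroup preserving each row and $Q_\lambda$ the subgroup preserving each column, and form the elements $a_\lambda = \sum_{p \in P_\lambda} p$ and $b_\lambda = \sum_{q \in Q_\lambda} \operatorname{sgn}(q)\, q$ in the group algebra $\mathbb{C}[S_n]$. The Young symmetrizer is $c_\lambda = a_\lambda b_\lambda$, and one defines $\mathbb{S}_\lambda := \mathbb{C}[S_n]\, c_\lambda$ as a left module. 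The key technical steps are then: (i) $c_\lambda^2 = n_\lambda c_\lambda$ for a nonzero scalar $n_\lambda$, so $\mathbb{C}[S_n] c_\lambda$ is (up to scaling) generated by an idempotent and hence a direct summand; (ii) $\mathbb{S}_\lambda$ is irreducible — this follows from showing $c_\lambda \,\mathbb{C}[S_n]\, c_\lambda = \mathbb{C}\, c_\lambda$, using the combinatorial lemma that for $p \in P_\lambda$, $q \in Q_\lambda$ one has $p c_\lambda q = \operatorname{sgn}(q) c_\lambda$, together with the ``straightening'' fact that any element not of this form is killed; and (iii) $\mathbb{S}_\lambda \not\cong \mathbb{S}_\mu$ for $\lambda \neq \mu$, which comes from comparing the dominance order on partitions and showing $c_\lambda \,\mathbb{C}[S_n]\, c_\mu = 0$ when $\mu \not\trianglerighteq \lambda$.

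I expect the main obstacle to be step (ii), the irreducibility of $\mathbb{S}_\lambda$, since it rests on the combinatorial heart of the theory: the lemma that if a permutation $\sigma$ cannot be written as $pq$ with $p \in P_\lambda$, $q \in Q_\lambda$, then there exist two entries lying in the same row of $T_\lambda$ and the same column of $\sigma(T_\lambda)$, which forces the relevant group-algebra element to vanish. Establishing this requires a careful pigeonhole argument on the tableau entries. The distinctness in step (iii) is a variation on the same combinatorial input applied across two shapes. Since the paper only needs the \emph{statement} of this classification for its later purposes, an acceptable alternative would be to simply cite \cite{FH} (Chapter 4) for the full construction, and present only the counting half via Proposition \ref{basicFacts} as the ``natural bijection'' — matching cycle types with partitions — which is likely the route the authors take given the survey character of this section.
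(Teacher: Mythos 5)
The paper states this theorem without proof as standard background, deferring to Chapter 4 of \cite{FH} and \cite{Et}, so there is no in-paper argument to compare against. Your sketch --- counting irreducibles via Proposition \ref{basicFacts} and conjugacy classes (= cycle types = partitions), then realizing $\mathbb{S}_\lambda$ via Young symmetrizers $c_\lambda = a_\lambda b_\lambda$ with the key steps $c_\lambda^2 = n_\lambda c_\lambda$, $c_\lambda \mathbb{C}[S_n] c_\lambda = \mathbb{C} c_\lambda$, and vanishing across distinct shapes --- is precisely the standard argument of the cited reference and is correct.
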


It so happens that a natural basis of an irreducible representation $\mathbb{S}_\lambda$ is given by the elements of $SYT(\lambda)$ (see \cite{OV} and \cite{VO}). Next we present a formula for the dimension of such representation. 

\begin{defn}
For a box with coordinates $(i,j)$, in Young diagram $Y_\lambda$, let \textbf{the hook length at $(i,j)$} be the number of boxes $(a,b)\in Y_\lambda$ with $a=i$ and $b\leq j$ or $b=j$ and $a\geq i$. This number will be denoted by $h_\lambda(i,j)$.
\end{defn}

\begin{rmk}
The dimension of representation $\mathbb{S}_\lambda$ corresponding to partition $\lambda$ is given by 
\begin{equation}
\dim(\mathbb{S}_\lambda)=\frac{n!}{\prod\limits_{(i,j)\in Y_\lambda}h_\lambda(i,j)}
\label{HLformula}
\end{equation}
This equality is usually referred to as the hook length formula.
\label{dimRemark}
\end{rmk}

\begin{ex}
Consider the partition $\lambda=(4,3,2)$ of $9$ and the box $(1,-1)$ in the diagram $Y_\lambda$. Then the hook length $h_\lambda(1,0)$ is equal to $6$ (see Figure \ref{Hook}). The dimension of the corresponding representation of $S_{9}$ is equal to $\frac{9!}{6\cdot5\cdot3\cdot1\cdot4\cdot3\cdot1\cdot2\cdot1}=168$.

\begin{figure}[htbp!]
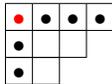

    \centering
    $\young(\redBull\bullet\bullet\bullet,\bullet~~,\bullet~)$
    \caption{Hook centered at box $(1,-1)$, depicted in red, inside diagram of shape $\lambda=(4,3,2)$}
    \label{Hook}
\end{figure}
    
\end{ex}

\begin{defn}
 The \textbf{Young-Jucys-Murphy elements}, denoted as YJM-elements, are defined by $\mathcal{J}_k=(1,k)+(2,k)+\hdots+(k-1,k)\in\mathbb{C}[S_n]$, where $2\leq k\leq n$. Additionally, $\mathcal{J}_1=0$.
\end{defn}

\begin{rmk}
The YJM elements generate a commutative subalgebra of the group algebra $\C[S_n]$.
\end{rmk}

\begin{defn}
     The \textbf{content} of a box with coordinates $(i,j)$ in a Young diagram $Y_\lambda$ is given by $c_\lambda(i,j):=i+j$.
\end{defn}

\begin{rmk}
Let $v_T\in \mathbb{S}_\lambda$ be the basis vector corresponding to $T\in SYT(\lambda)$. Then $v_T$ is an eigenvector for $\mathcal{J}_k$ with eigenvalue $c_\lambda(i_k,j_k)$, where $(i_k,j_k)\in T$ is the box with number $k$.
\label{YJMevals}
\end{rmk}

\begin{ex}
The content of each box in tableaux $T$ 
    $$\young(1246,35,7)$$ is given by $$\young(0123,\minOne0,\minTwo\minOne).$$ For instance, as number $3$ is assigned to the box with coordinates $(1,-2)$ and the contents of it is $-1$, we conclude that $\mathcal{J}_3(v_T)=-v_T$.
\end{ex}

\begin{thm}
Nonisomorphic irreducible representations of $SU_d(\mathbb{C})$ are in one-to-one correspondence with partitions $\lambda\vdash n$, where the corresponding Young diagram contains at most than $d$ rows.
\end{thm}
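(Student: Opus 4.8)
The plan is to realize these representations explicitly inside tensor powers of the defining representation and then combine the double commutant theorem with highest-weight theory. Set $V=\mathbb{C}^d$ and consider $W=V^{\otimes n}$, which carries commuting actions of $S_n$ (permuting tensor factors, as in Section~2) and of $GL_d(\mathbb{C})\supseteq SU_d(\mathbb{C})$ (acting diagonally). First I would record the algebraic core of Schur--Weyl duality: the images of $\mathbb{C}[S_n]$ and of the algebra generated by $GL_d(\mathbb{C})$ inside $\End(W)$ are mutual centralizers. Granting this, the double commutant theorem yields
\[
W \;=\; \bigoplus_{\substack{\lambda\vdash n\\ \ell(\lambda)\le d}} \mathbb{S}_\lambda \otimes \mathbb{V}_\lambda ,
\]
with $\mathbb{V}_\lambda:=\Hom_{S_n}(\mathbb{S}_\lambda,W)$ an irreducible $GL_d(\mathbb{C})$-module, and $\mathbb{V}_\lambda\neq 0$ exactly when $Y_\lambda$ has at most $d$ rows: any row-symmetrized tensor adapted to a tableau of shape $\lambda$ vanishes as soon as a column would require more than $d$ distinct basis vectors of $V$. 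This produces one irreducible $GL_d(\mathbb{C})$-module per admissible $\lambda$.

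Next I would descend from $GL_d(\mathbb{C})$ to $SU_d(\mathbb{C})$. Since $SU_d(\mathbb{C})$ is Zariski-dense in $SL_d(\mathbb{C})$ and $GL_d(\mathbb{C})$ is generated by $SL_d(\mathbb{C})$ together with the scalar matrices (which act on $\mathbb{V}_\lambda$ through a character), a subspace of $\mathbb{V}_\lambda$ is $GL_d(\mathbb{C})$-stable iff it is $SU_d(\mathbb{C})$-stable; hence each $\mathbb{V}_\lambda$ stays irreducible over $SU_d(\mathbb{C})$. For non-isomorphism, I would compute the character of $\mathbb{V}_\lambda$ on a maximal torus: it equals the Schur polynomial $s_\lambda(x_1,\dots,x_d)$, equivalently $\mathbb{V}_\lambda$ has highest weight $(\lambda_1,\dots,\lambda_d)$. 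Two partitions of $n$ with at most $d$ rows giving isomorphic $SU_d(\mathbb{C})$-modules would have to differ by a multiple of $(1,\dots,1)$, impossible for distinct partitions of the same size; so $\lambda\mapsto\mathbb{V}_\lambda$ is injective. Finally, for surjectivity onto all irreducibles I would invoke the highest-weight classification of $SU_d(\mathbb{C})$: every dominant integral weight is, after subtracting a suitable multiple of $(1,\dots,1)$, of the form $(\lambda_1,\dots,\lambda_d)$ with $\lambda$ a partition having at most $d$ rows, and taking $n=|\lambda|$ exhibits the corresponding irreducible inside $V^{\otimes n}$.

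The main obstacle is the Schur--Weyl / double commutant input -- showing the two algebra images in $\End(W)$ are \emph{exact} mutual centralizers, so that the $\mathbb{V}_\lambda$ are genuinely irreducible (not merely isotypic) and vanish precisely outside the stated range. The remaining pieces are light: the $GL_d\rightsquigarrow SU_d$ reduction is a one-line density argument, the non-isomorphism is bookkeeping with Schur polynomials, and surjectivity is the standard weight-lattice dictionary. Since Schur--Weyl duality is developed in Section~5.2, in the write-up I would cite it from there and keep this proof to the three short reductions (irreducibility over $SU_d(\mathbb{C})$, distinctness via $s_\lambda$, and highest-weight surjectivity), referring to Chapters~4, 15, 16 of \cite{FH} for the classical facts.
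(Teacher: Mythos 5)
The paper does not actually prove this statement: it is quoted as standard background (with pointers to \cite{FH}, \cite{Et}, and \cite{GC} for the SSYT basis), and the Appendix only proves Theorems \ref{HamThm} and \ref{MainThm} and Corollary \ref{MinSolCor}. So there is no in-paper argument to compare against; what you have written is a self-contained proof along the standard Schur--Weyl route, and it is essentially correct. Two points deserve tightening. First, the vanishing criterion for $\mathbb{V}_\lambda$ comes from the \emph{column antisymmetrization} in the Young symmetrizer (antisymmetrizing more than $d$ factors of a $d$-dimensional space gives zero); your phrase ``row-symmetrized tensor'' points at the wrong half of the symmetrizer, even though your parenthetical about columns shows you know the right mechanism. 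You should also say a word about why $\mathbb{V}_\lambda\neq 0$ when $\ell(\lambda)\le d$, e.g.\ by exhibiting a basis tensor on which the symmetrizer is nonzero. Second, the theorem as stated in the paper is imprecise --- $SU_d(\mathbb{C})$ has infinitely many irreducibles, so they cannot all be indexed by partitions of a \emph{fixed} $n$; the honest statement is that the irreducibles occurring in $V^{\otimes n}$ are indexed by $\lambda\vdash n$ with $\ell(\lambda)\le d$, and that every irreducible of $SU_d(\mathbb{C})$ arises this way for \emph{some} $n$. Your surjectivity step (choosing $n=|\lambda|$ after normalizing the highest weight modulo $(1,\dots,1)$) silently adopts this reading, and your observation that two distinct partitions of the same $n$ cannot differ by a multiple of $(1,\dots,1)$ is exactly the point needed to keep the correspondence injective after restricting from $GL_d(\mathbb{C})$ to $SU_d(\mathbb{C})$. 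With those clarifications your proof is complete and, if anything, more careful than the source it would replace.
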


Irreducible $SU_d(\mathbb{C})$-representation corresponding to partition $\lambda$ will be denoted by $V_\lambda$.

\begin{rmk}
A natural basis of an irreducible representation $V_\lambda$ is indexed by the elements of $SSYT(\lambda)$ filled with numbers $\{1,\hdots,d\}$ (see \cite{GC}). The dimension of $V_\lambda$ can be computed as \begin{equation}
\dim(V_\lambda)=\prod\limits_{1\leq i<j\leq d}\frac{\lambda_i-\lambda_j+(j-i)}{j-i}
\label{DimFormula}
\end{equation}
\label{dimRmk2}
\end{rmk}
The latter result can be found as Theorem $6.3(1)$ in \cite{FH}.

\subsection{Schur-Weyl duality}
Let $V$ be a $d$-dimensional vector space and consider two actions on a tensor power $W=V^{\otimes n}$:
\begin{enumerate}
    \item The diagonal action of the unitary group $GL(V)$;
    \item The action of symmetric group $S_n$, achieved by permuting the factors.
\end{enumerate}

The actions of the two groups are given by 
\begin{itemize}
    \item $g(v_1\otimes\hdots\otimes v_n):=g(v_1)\otimes\hdots\otimes g(v_n)$ for $g\in GL(V)$ and
    \item $\sigma(v_1\otimes\hdots\otimes v_n):=v_{\sigma(1)}\otimes\hdots\otimes v_{\sigma(n)}$ for $\sigma\in S_n$
\end{itemize}
and commute with each other. 

\begin{defn}
The \textbf{centralizer} of a subalgebra $\mathcal{A}$ of an algebra $\mathcal{B}$ is the subalgebra of  $\mathcal{B}$ which consist of elements that commute with every element of $\mathcal{A}$:

 $$Z_\mathcal{B}(\mathcal{A}):=\{b\in \mathcal{B}~|~ab=ba \mbox{ for all } a \in \mathcal{A}\}.$$ 
\end{defn}

\begin{thm}
Let $A, B \subset \mathrm{End}(W)$ be the subalgebras generated by $GL(V)$ and $\mathbb{C}[S_n]$ inside the algebra of linear transformations of $W$, respectively. Then $A$ and $B$ are the centralizers of each other in the endomorphism algebra.
\label{DoubleCentThm}
\end{thm}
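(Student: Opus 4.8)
The plan is to prove the two containments $Z_{\End(W)}(B) \subseteq A$ and $Z_{\End(W)}(A) \subseteq B$ separately, the first being the substantial one, and then note that the reverse inclusions are immediate since the actions of $GL(V)$ and $S_n$ commute. The strategy I would follow is the standard one via the symmetric powers, but let me lay out the steps. First I would reduce the problem: it suffices to show $Z_{\End(W)}(B) = A$, because given this, the double-centralizer theorem for algebras acting on a finite-dimensional space (if $A$ is a semisimple subalgebra of $\End(W)$, then $Z(Z(A)) = A$) yields $Z_{\End(W)}(A) = Z(Z(B)) = B$ once we know $B$ is semisimple — which holds since $B$ is a quotient of the semisimple algebra $\C[S_n]$ (Maschke, cf. Proposition \ref{basicFacts}), and $A$ is semisimple because $GL(V)$, being reductive, has the property that $W$ is completely reducible as a $GL(V)$-module (alternatively pass to the compact form $SU_d$, or to $U(d)$, and use Proposition \ref{basicFacts}-style complete reducibility for compact groups via averaging). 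So the crux is: \emph{every endomorphism of $W = V^{\otimes n}$ commuting with the $S_n$-action is a linear combination of operators $g^{\otimes n}$, $g \in GL(V)$.}

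The key step is the identification $Z_{\End(W)}(B) = (\End(V)^{\otimes n})^{S_n} = \operatorname{Sym}^n(\End(V))$, the $n$-th symmetric power. Indeed, $\End(W) = \End(V)^{\otimes n}$ canonically, $B$ acts by permuting tensor factors of this algebra (conjugation), so the commutant of $B$ inside $\End(W)$ is exactly the $S_n$-fixed subspace $\operatorname{Sym}^n(\End(V))$. Now I would invoke the elementary fact that $\operatorname{Sym}^n(U)$, for any vector space $U$ over an infinite field, is spanned by the "diagonal" elements $u^{\otimes n}$, $u \in U$ — this follows by polarization, i.e. from the identity expressing $u_1 \odot \cdots \odot u_n$ as a $\pm$-combination of $(u_{i_1} + \cdots + u_{i_k})^{\otimes n}$, valid in characteristic $0$. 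Applying this with $U = \End(V)$: every element of $\operatorname{Sym}^n(\End(V))$ is a linear combination of $T^{\otimes n}$ with $T \in \End(V)$. Finally, since invertible matrices are Zariski-dense in $\End(V)$ (the complement of $GL(V)$ is the zero set of $\det$), and $T \mapsto T^{\otimes n}$ is polynomial, the span of $\{T^{\otimes n} : T \in \End(V)\}$ equals the span of $\{g^{\otimes n} : g \in GL(V)\}$, which is precisely $A$. This gives $Z_{\End(W)}(B) \subseteq A$; the opposite inclusion is clear, so $Z_{\End(W)}(B) = A$, and combined with the semisimplicity reduction above, the theorem follows.

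The main obstacle is the polarization/density argument — more precisely, making sure the characteristic-$0$ (here, $\C$) hypothesis is used honestly: the claim "$\operatorname{Sym}^n U$ is spanned by diagonal tensors" genuinely fails in small positive characteristic, so the proof must flag that $n! $ is invertible, and the density step needs $\C$ (or at least an infinite field) so that a polynomial map vanishing on $GL(V)$ vanishes on $\End(V)$. A secondary point requiring care is the semisimplicity of $A$: since $GL(V) = GL_d(\C)$ is not compact, one cannot average directly over it; the clean fix is to observe that $A$ is also generated by the compact group $U(d)$ (its $n$-th tensor powers span the same subalgebra, again by density of $U(d)$ in $GL_d$ under Zariski closure, or simply because $\operatorname{Sym}^n \End(V)$ is already pinned down by the above and is a semisimple algebra as the commutant of the semisimple $B$). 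Everything else — the canonical identification $\End(V^{\otimes n}) \cong \End(V)^{\otimes n}$, the interpretation of the $B$-action as factor-permutation, and the abstract double-centralizer theorem for semisimple subalgebras — is routine and I would state it without grinding through the bookkeeping.
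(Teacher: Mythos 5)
Your proposal is mathematically correct, but note that the paper does not prove Theorem \ref{DoubleCentThm} at all: it is quoted as the classical Schur--Weyl double centralizer theorem, with the reader referred to \cite{FH} and \cite{Et} for proofs, and the paper's Appendix only proves Theorems \ref{HamThm} and \ref{MainThm} and Corollary \ref{MinSolCor}. So there is no in-paper argument to compare against. Your route --- identify $Z_{\End(W)}(B)$ with $\bigl(\End(V)^{\otimes n}\bigr)^{S_n}$, span the symmetric tensors by diagonal elements $T^{\otimes n}$ via polarization in characteristic $0$, pass from $\End(V)$ to $GL(V)$ by Zariski density of the invertibles, and then recover $Z_{\End(W)}(A)=B$ from the abstract double centralizer theorem using semisimplicity of the image of $\C[S_n]$ --- is precisely the standard proof given in the cited references (it is essentially the argument of Section 4.18--4.19 of \cite{Et}). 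Your two flagged caution points are the right ones: polarization genuinely needs $n!$ invertible, and the density step needs an infinite field; also, once $B$ is known to be semisimple, the semisimplicity of $A=Z(B)$ comes for free from the double centralizer theorem, so your secondary worry about averaging over the noncompact $GL(V)$ can be dispensed with entirely.
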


\begin{thm}
The vector space $W$ admits a canonical decomposition 

\begin{equation}
W=\underset{r(\lambda)\leq d}{\underset{\lambda\vdash n}{\bigoplus}} W_{\lambda}=\underset{r(\lambda)\leq d}{\underset{\lambda\vdash n}{\bigoplus}} \mathbb{S}_\lambda\otimes V_{\lambda},
\label{SchurWeyl}
\end{equation}

where $\lambda$ runs over partitions with at most $d$ constituents, $\mathbb{S}_\lambda$ and $V_{\lambda}$ are irreducible representations of $S_n$ and $GL(V)$ indexed by $\lambda$, respectively.
\end{thm}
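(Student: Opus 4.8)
The plan is to derive the decomposition \eqref{SchurWeyl} as a formal consequence of the double centralizer theorem (Theorem \ref{DoubleCentThm}) together with the structure theory of semisimple algebras and the classifications of irreducibles for $S_n$ and $SU_d(\mathbb C)$ recalled in Section~7.1. The starting point is that $B=\mathbb C[S_n]/\mathrm{Ann}$ acting on $W=V^{\otimes n}$ is a semisimple algebra (it is a quotient of the group algebra of a finite group over $\mathbb C$, hence semisimple by Maschke), and by Theorem~\ref{DoubleCentThm} its centralizer in $\End(W)$ is exactly $A$, the algebra generated by the $GL(V)$-action. A general fact about a semisimple subalgebra $B\subseteq\End(W)$ and its centralizer $A$: both are semisimple, $A$ is the double centralizer of itself, and $W$ decomposes as a bimodule as $W\cong\bigoplus_{\mu} U_\mu\otimes N_\mu$, where $\mu$ ranges over the irreducible $B$-modules $U_\mu$ appearing in $W$, and $N_\mu:=\Hom_B(U_\mu,W)$ is an irreducible $A$-module, with the $N_\mu$ pairwise non-isomorphic and exhausting the irreducible $A$-modules occurring in $W$.

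The steps I would carry out, in order, are: (1) state and invoke the abstract bimodule decomposition for a semisimple algebra and its centralizer; (2) identify the set of $S_n$-irreducibles that actually occur in $V^{\otimes n}$ — these are the $\mathbb S_\lambda$ with $\ell(\lambda)\le d$ — which one sees because $\mathbb S_\lambda$ occurs in $V^{\otimes n}$ iff the Schur functor $\mathbb S^\lambda(V)\ne 0$ iff $Y_\lambda$ has at most $d=\dim V$ rows; (3) identify the corresponding irreducible module $N_\lambda=\Hom_{S_n}(\mathbb S_\lambda, V^{\otimes n})$ for the algebra $A$ generated by $GL(V)$, and argue that this is precisely the irreducible polynomial representation $V_\lambda$ of $GL(V)$ indexed by $\lambda$ (equivalently the irreducible $SU_d(\mathbb C)$-representation of highest weight $\lambda$); (4) combine (1)–(3) to get $W=\bigoplus_{\ell(\lambda)\le d}\mathbb S_\lambda\otimes V_\lambda$, and note canonicity follows since the isotypic components are intrinsically defined. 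One can alternatively verify the indexing and the non-vanishing condition at the level of dimensions/characters: the character of $V^{\otimes n}$ is $(x_1+\cdots+x_d)^n=\sum_{\lambda}\dim(\mathbb S_\lambda)\, s_\lambda(x_1,\dots,x_d)$ by the classical identity for power sums in terms of Schur functions, and $s_\lambda$ vanishes identically in $d$ variables exactly when $\ell(\lambda)>d$, while for $\ell(\lambda)\le d$ it is the character of $V_\lambda$; this gives an independent confirmation of \eqref{SchurWeyl} as an equality of $S_n\times GL(V)$-characters, hence of representations.

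The main obstacle is step~(3): showing that the multiplicity space $N_\lambda=\Hom_{S_n}(\mathbb S_\lambda,V^{\otimes n})$ is an \emph{irreducible} $GL(V)$-module and that it is the one conventionally labelled $V_\lambda$. Irreducibility is handed to us by the double centralizer formalism once we know $A$ is semisimple and equals the full centralizer of $B$ — so the real content is matching the abstract label with the concrete highest-weight classification. The cleanest route is to exhibit $N_\lambda$ as the image of a Young symmetrizer $c_\lambda\in\mathbb C[S_n]$ acting on $V^{\otimes n}$, i.e.\ $V_\lambda\cong c_\lambda\cdot V^{\otimes n}$, and then either (a) compute its highest weight directly by evaluating $c_\lambda$ on a decomposable tensor built from a weight basis of $V$, or (b) compare $\dim(c_\lambda V^{\otimes n})$ with the formula \eqref{DimFormula} from Remark~\ref{dimRmk2}, using the SSYT basis description, to pin down the isomorphism type. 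Everything else — semisimplicity, the bimodule decomposition, the restriction $\ell(\lambda)\le d$ — is either quoted from Theorem~\ref{DoubleCentThm} and Proposition~\ref{basicFacts} or is a short character computation.
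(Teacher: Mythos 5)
Your argument is correct, but note that the paper itself offers no proof of this statement: it is quoted as classical Schur--Weyl duality, with the reader referred to Fulton--Harris and Etingof et al. What you have written is essentially the standard proof from those references --- the double centralizer theorem (the paper's Theorem \ref{DoubleCentThm}) plus the abstract bimodule decomposition $W\cong\bigoplus_\mu U_\mu\otimes\Hom_B(U_\mu,W)$ for a semisimple algebra and its centralizer, with the multiplicity spaces identified as the $V_\lambda$ via Young symmetrizers or a highest-weight/character computation, and the constraint $\ell(\lambda)\le d$ coming from the vanishing of $s_\lambda$ in $d$ variables. All of these steps are sound, so your proposal is a correct reconstruction of the proof the paper delegates to its references rather than a divergence from it.
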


\begin{ex}
 Let $V$ be the $3$-dimensional space spanned by qutrits $|0\rangle,|1\rangle$ and $|2\rangle$ and $W=V^{\otimes 7}$. The decomposition in \eqref{SchurWeyl} gives a splitting of $W$ into a direct sum of eight subspaces indexed by partitions of $7$ in at most three summands  (see Figure \ref{SubDivPic} for an illustration).
\end{ex}

\begin{figure}[htbp!]
	\begin{center}	
		\begin{tikzpicture}[scale=0.5,thick]
			\draw[thick, fill=brown!50] (0,0) -- (14,0) -- (14,10) -- (0,10) -- cycle;
			
    \node (B) at (13,5) {$W_{(7)}$};
    \node (B) at (1.2,1) {$W_{(6,1)}$};
   \node (B) at (9.3,4) {$W_{(5,2)}$};
   \node (B) at (12.8,0.7) {$W_{(5,1,1)}$};
   \node (B) at (3,4.7) {$W_{(4,3)}$};
   \node (B) at (2,8) {$W_{(4,2,1)}$};
   \node (B) at (11,9) {$W_{(3,3,1)}$};
   \node (B) at (5.3,2) {$W_{(3,2,2)}$};

\tikzset{decoration={snake,amplitude=.4mm,segment length=2mm,
                       post length=0mm,pre length=0mm}}
  \draw[decorate] (0,6)  to [bend right=45] (5,10);
  \draw[decorate] (5,10)  to [bend right=30] (14,7);
  \draw[decorate] (5,10)  to [bend right=30] (11,0);
  \draw[decorate] (4,0)  to [bend left=20] (0, 6);
  \draw[decorate] (11,0)  to [bend left=30] (14,3);
  \draw[decorate] (12.3,6.7)  to [bend right=80] (14,3);
  \draw[decorate] (4,0)  to [bend left=70] (6,5);

		\end{tikzpicture}
	\end{center}
 \caption{Decomposition of $W=V^{\otimes 7}$ for $d=3$}
 \label{SubDivPic}
\end{figure}
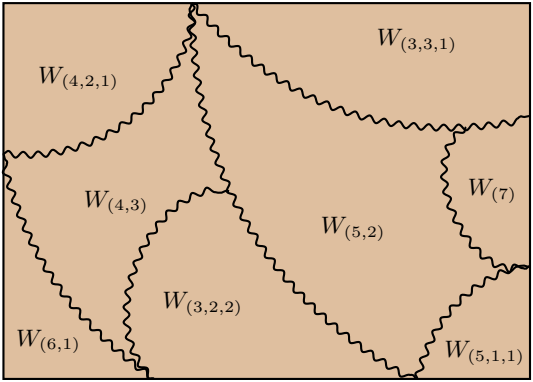

\section{Case study: $S_n\subseteq G$ or $SU_d(\mathbb{C})\subseteq G$}

In this section, we will develop a specific formula for a Hamiltonian $H_{M,\lambda}$ that fulfills the assumptions of Theorem \ref{MainObs} for each subspace $W_\lambda$ appearing in decomposition \eqref{SchurWeyl}. We proceed by deriving various corollaries of the fact that $H_{M,\lambda}$ with aforementioned properties exists.

\begin{defn}
Consider a partition $\lambda\vdash n$ with a maximum of $d$ rows. We define $T_\lambda\in SYT(\lambda)$ as the tableau of shape $\lambda=(\lambda_1,\hdots,\lambda_s)$, where the number $i+\sum\limits_{k=1}^{j-1}\lambda_k$ is assigned to the box at coordinates $(i,j)$. Let $T\subset SU_d(\mathbb{C})$  be a maximal torus and $\mathfrak{t}=\text{Lie}(T)$ be the corresponding Lie algebra. Set $v_{T_{\lambda}}\in \mathbb{S}_\lambda$ to be the vector associated with tableau $T_\lambda$, and $w_\lambda\in V_\lambda$ a minimal weight vector with respect to $\mathfrak{t}$.

Define the Hamiltonian 
\begin{equation}
\begin{aligned}
    &H_{M,\lambda} = \sum\limits_{i=1}^s \left(\mathcal{J}_{\lambda_1+\hdots+\lambda_i}-(\lambda_i-i)\cdot I\right)^{2}+\\
    &+ \varepsilon\sum\limits_{j=1}^n I\otimes\hdots\otimes\underset{j}{\tau}\otimes\hdots\otimes I,
\end{aligned}
\end{equation}
where $\tau=\text{diag}(-d/2,-d/2+1,\hdots,d/2)\in \mathfrak{t}$ 
with $0<\varepsilon<\frac{2}{nd}$.
\end{defn}

\begin{rmk}
The Hamiltonians 
\[
H_{M,\lambda}-\varepsilon\sum\limits_{j=1}^n I\otimes\hdots\otimes\underset{j}{\tau}\otimes\hdots\otimes I
\]
(up to addition of a constant multiple of identity operator) belong to the family of Hamiltonians introduced in Section III of \cite{Z1LLSK2}.
\end{rmk}

For the reader's convenience, we present a concrete example illustrating the derivation of a formula for the mixer Hamiltonian associated with a given Young diagram.

\begin{ex}
Consider a Young diagram with shape $\lambda=(3,3,2,1)$. The corresponding tableau, denoted as $T_\lambda$, is illustrated in Figure \ref{SYT}. In this case, the mixer Hamiltonian $H_{M,\lambda}$ is given by 
\begin{equation}
\begin{aligned}
&H_{M,\lambda}=  (\mathcal{J}_{3}-(3-1))^{2}+(\mathcal{J}_{6}-(3-2))^{2}+\\
&+(\mathcal{J}_{8}-(2-3))^{2}+(\mathcal{J}_{9}-(1-4))^{2}+\varepsilon\tau= \\ 
&=(\mathcal{J}_{3}-2)^{2}+(\mathcal{J}_{6}-1)^{2}+(\mathcal{J}_{8}+1)^{2}+(\mathcal{J}_{9}+3)^{2}+\varepsilon\tau,
\end{aligned}
\end{equation}
where $\varepsilon<\frac{1}{9d}$.
\end{ex}

\begin{figure}[htbp!]
    \centering
    $$\young(123,456,78,9)$$
    \caption{Standard Young tableaux $T_\lambda$ for diagram of shape $\lambda=(3,3,2,1)$}
    \label{SYT}
\end{figure}
 
\begin{thm} 
Every Hamiltonian  $H_{M,\lambda}$  meets the conditions specified in  Theorem \ref{MainObs}. The lowest-energy eigenspace of $H_{M,\lambda}$ is  spanned by the vector $\xi_\lambda:=v_{T_{\lambda}}\otimes w_\lambda$.
\label{MainThm}
\end{thm}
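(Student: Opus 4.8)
The plan is to verify the three bullet conditions of Theorem \ref{MainObs} for $H_{M,\lambda}$ one at a time, treating the two summands of $H_{M,\lambda}$ separately. Write $A_\lambda := \sum_{i=1}^s (\mathcal{J}_{\lambda_1+\cdots+\lambda_i} - (\lambda_i - i)I)^2$, acting on the $\mathbb{S}_\lambda$-tensor factor, and $B := \varepsilon \sum_{j=1}^n I \otimes \cdots \otimes \underset{j}{\tau} \otimes \cdots \otimes I$, acting on the $V_\lambda$-tensor factor via the diagonal $SU_d(\mathbb{C})$-action. First I would record that $A_\lambda$ lies in the group algebra $\mathbb{C}[S_n]$ (indeed in the commutative YJM subalgebra) and hence, by Schur--Weyl duality (Theorem \ref{DoubleCentThm}), preserves each $W_\mu = \mathbb{S}_\mu \otimes V_\mu$ and acts as (operator on $\mathbb{S}_\mu$) $\otimes\, I$; similarly $B$ comes from $SU_d(\mathbb{C})$ and acts as $I \otimes$ (operator on $V_\mu$). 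This immediately gives the third condition: $H_{M,\lambda}$ preserves the decomposition \eqref{SchurWeyl}.

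Next I would pin down the lowest-energy eigenspace. On $\mathbb{S}_\mu$, Remark \ref{YJMevals} says the SYT-basis vector $v_T$ is a simultaneous eigenvector of all $\mathcal{J}_k$ with eigenvalue the content $c_\mu(i_k,j_k)$; hence $v_T$ is an eigenvector of $A_\lambda$ with eigenvalue $\sum_{i=1}^s (c_\mu(\text{box numbered }\lambda_1+\cdots+\lambda_i \text{ in } T) - (\lambda_i - i))^2 \ge 0$. This is zero exactly when, for each $i$, the box of $T$ carrying the label $\lambda_1+\cdots+\lambda_i$ has content $\lambda_i - i$. I would argue combinatorially that this forces $\mu = \lambda$ and $T = T_\lambda$: the content $\lambda_i - i = (i,\lambda_i)$-content value $i + (-\lambda_i)$... — more precisely, one checks that in $T_\lambda$ the label $\lambda_1+\cdots+\lambda_i$ sits at $(i,\lambda_i)$ with content $i - \lambda_i$, so I will fix the sign convention in the content so the stated eigenvalue shift matches, and then show no other tableau (of any shape $\mu$ with $r(\mu)\le d$) can satisfy all $s$ content constraints simultaneously — essentially because these constraints pin down the right-hand box of every row, which reconstructs $\lambda$ and the row-filling order of $T_\lambda$. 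Thus $\ker A_\lambda = \mathbb{C}\langle v_{T_\lambda}\rangle \otimes V_\lambda$, and $A_\lambda$ is positive semidefinite with this as its lowest eigenspace. On $V_\lambda$, $B$ acts as $\varepsilon$ times the image of $\sum_j \tau^{(j)}$, which is a Cartan element; its eigenvalues on $V_\mu$ are $\varepsilon$ times the weights (paired with $\tau$), minimized uniquely at the lowest weight vector $w_\mu$ — uniqueness of the minimizing weight line for this generic regular $\tau$ being the standard fact that extreme weights of an irreducible $SU_d$-representation have multiplicity one. Combining: on $W_\lambda$ the minimum of $A_\lambda + B$ is attained only on $\mathbb{C}\langle v_{T_\lambda}\rangle \otimes \mathbb{C}\langle w_\lambda\rangle$, and the bound $0 < \varepsilon < \tfrac{2}{nd}$ is exactly what is needed so that the $B$-gap on $W_\lambda$ cannot compensate a nonzero (hence $\ge 1$, as contents are integers so the squared shifts are nonnegative integers) contribution from $A_\lambda$ on any other $W_\mu$; here I would make the crude estimate $\|B\| \le \varepsilon n d /2 < 1$ to conclude the global minimum over all of $W$ is still uniquely $\xi_\lambda = v_{T_\lambda}\otimes w_\lambda$.

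Finally I would verify the Perron--Frobenius hypotheses of Theorem \ref{PF} for $-H_{M,\lambda}$ (up to an additive scalar making all entries nonnegative) in the standard basis $\{v_x\}_{x\in\mathbb{D}^n}$ of $W$: nonnegativity of off-diagonal entries and irreducibility. For the $B$-part, each $\tau^{(j)}$ is diagonal in the standard basis, contributing only to diagonal entries, so I must show the $A_\lambda$-part already provides a connected (irreducible) nonnegative pattern — each transposition $(a,k)$ acts on basis strings by swapping two coordinates, squares of sums of transpositions then have the right sign after a shift, and the graph whose edges are the nonzero off-diagonal entries is connected because transpositions generate $S_n$ acting transitively on the relevant orbit structure. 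The main obstacle I anticipate is precisely this irreducibility/Perron--Frobenius check combined with keeping the sign bookkeeping consistent: $A_\lambda$ is a sum of \emph{squares} of (shifted) real symmetric operators, so its off-diagonal entries in the standard basis are not obviously nonnegative, and one must either choose the additive constant and possibly an overall sign carefully, or invoke the cited family of Hamiltonians from \cite{Z1LLSK2} (via the Remark following the definition) where these positivity and connectivity properties are already established, and then note that adding the diagonal perturbation $B$ with $\varepsilon$ small preserves them. The content-combinatorics uniqueness argument in the second step is routine but must be written carefully to cover \emph{all} shapes $\mu$, not just $\mu = \lambda$.
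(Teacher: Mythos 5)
Your proof follows essentially the same route as the paper's: the same splitting into the YJM part $A_\lambda$ and the Cartan perturbation $B$, the same Schur--Weyl argument for preservation of the decomposition \eqref{SchurWeyl}, and the same two-stage eigenvalue analysis (nonnegative-integer spectrum of the YJM part forcing the content constraints that single out $T_\lambda$, plus the perturbation bound $\varepsilon nd/2<1$) identifying $v_{T_\lambda}\otimes w_\lambda$ as the unique ground state. The one place you go beyond the paper is the Perron--Frobenius condition: the appendix lists it as the third assertion to be verified but then never addresses it, whereas you correctly observe that nonnegativity and irreducibility of $A_\lambda$ in the standard basis are not obvious (it is a sum of \emph{squares} of shifted transposition sums, so off-diagonal signs need care) and would require either a careful choice of additive constant or an appeal to the family of Hamiltonians in \cite{Z1LLSK2}; the gap you flag there is genuine and is not closed in the paper either.
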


\begin{cor}
Suppose that the problem Hamiltonian $H_P$ commutes with the action of $SU_d(\mathbb{C})$ or $S_n$. Then for each diagram $Y_\lambda$ with no more than $d$ rows the state 
\[\nu_\lambda:=\lim\limits_{p\rightarrow \infty}(e^{-i\beta H_{M,\lambda}}e^{-i\beta H_P})^p (\xi_\lambda)
\]
emerges as a ground state of the problem Hamiltonian $H_P$.
\end{cor}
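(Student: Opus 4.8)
The plan is to reduce this corollary directly to Theorem \ref{MainThm} together with the Main Theorem (Theorem \ref{MainObs}) and the adiabatic convergence property of QAOA. First I would invoke Theorem \ref{HamThm} (or rather the hypothesis): since $H_P$ commutes with the action of $SU_d(\mathbb{C})$ or of $S_n$, Schur--Weyl duality (Theorem in Section 6.2, Equation \eqref{SchurWeyl}) guarantees that $W$ decomposes as $\bigoplus_{r(\lambda)\leq d} W_\lambda$ with each $W_\lambda = \mathbb{S}_\lambda\otimes V_\lambda$ an isotypic component for the relevant group, and hence $H_P$ preserves this decomposition. Next, Theorem \ref{MainThm} tells us that for each such $\lambda$ the Hamiltonian $H_{M,\lambda}$ satisfies the three bullet conditions of Theorem \ref{MainObs}: it obeys the Perron--Frobenius hypotheses of Theorem \ref{PF}, its one-dimensional lowest-energy eigenspace $\mathbb{C}\langle\xi_\lambda\rangle$ with $\xi_\lambda = v_{T_\lambda}\otimes w_\lambda$ lies inside $W_\lambda$, and it preserves the decomposition \eqref{generalDecomp2}.

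Given these two inputs, Theorem \ref{MainObs} applies with $K = SU_d(\mathbb{C})$ (resp. $S_n$) and $W_i = W_\lambda$, yielding a well-defined reduced algorithm $QAOA_\lambda$ with problem Hamiltonian $H_P$, mixer $H_{M,\lambda}$, and initial state $\xi_\lambda$, whose ambient Hilbert space is $W_\lambda$. The remaining step is to argue that the large-$p$ limit of the QAOA iteration converges to a ground state of $H_P$. Here I would appeal to the adiabatic theorem / Perron--Frobenius argument already cited in the paper (see \cite{BKZS} and the discussion after Theorem \ref{PF}): because $H_{M,\lambda}$ meets the Perron--Frobenius conditions and the linear interpolation $tH_P + (1-t)H_{M,\lambda}$ has a nondegenerate lowest-energy state along the path, one can choose a schedule of angles $\{\beta_j,\gamma_j\}$ so that $(e^{-i\beta H_{M,\lambda}}e^{-i\beta H_P})^p(\xi_\lambda)$ tracks the instantaneous ground state and, as $p\to\infty$, approaches the lowest-energy eigenvector of the final Hamiltonian, which is $H_P$ restricted to $W_\lambda$. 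Finally, since $H_P$ acts on $W_\lambda$ as a sub-operator of the global $H_P$ and its eigenvalues on $W_\lambda$ are a subset of the eigenvalues of $H_P$ on $W$, a lowest-energy state of $H_P|_{W_\lambda}$ is in particular an eigenvector of $H_P$; denoting the resulting limit by $\nu_\lambda$ gives the claim. (One should note that $\nu_\lambda$ need not be a \emph{global} ground state of $H_P$ — only an eigenstate — but the statement as phrased reads "a ground state of $H_P$," which I would interpret, consistently with contribution (5) and Corollary \ref{MinSolCor} of the introduction, as the ground state within the invariant block $W_\lambda$.)

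The main obstacle I anticipate is the convergence claim itself: strictly speaking, the plain QAOA iteration $(e^{-i\beta H_{M,\lambda}}e^{-i\beta H_P})^p$ with a \emph{single fixed} $\beta$ does not literally implement adiabatic evolution, so the limit in the statement must be understood via the standard Trotterized-adiabatic correspondence in which the angles are allowed to vary with $p$ (as in the general QAOA chain \eqref{qaoa-chain}). I would make this dependence explicit, cite the adiabatic theorem for the gap condition guaranteed by Perron--Frobenius (Theorem \ref{PF}) applied to the interpolating family inside $W_\lambda$, and thereby obtain that the limiting vector is the Perron--Frobenius eigenvector of $-H_P|_{W_\lambda}$, i.e. a lowest-energy state of $H_P$ on $W_\lambda$. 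The rest is bookkeeping: invariance of $W_\lambda$ under all $e^{-i\gamma H_P}$ and $e^{-i\beta H_{M,\lambda}}$ is immediate from Lemma \ref{Schur} (for $H_P$) and from the third bullet of Theorem \ref{MainThm} (for $H_{M,\lambda}$), so the entire trajectory stays in $W_\lambda$ and the limit is well-defined there.
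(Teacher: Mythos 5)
Your proposal takes essentially the same route as the paper, whose entire proof reads: ``A direct consequence of part $(2)$ of Theorem \ref{MainThm} and the adiabatic theorem'' --- your elaboration (invariance of $W_\lambda$, the Perron--Frobenius gap for the interpolating family, and the Trotterized-adiabatic reading of the $p\to\infty$ limit with varying angles) simply fills in what the paper leaves implicit. One remark: your parenthetical caveat is not mere pedantry --- the adiabatic argument only delivers the lowest-energy state of $H_P|_{W_\lambda}$, which need not be a \emph{global} ground state of $H_P$ (for instance, if the unique minimizer of a symmetric $F$ is a constant string, its $S_n$-orbit spans only the trivial isotypic component, so the global minimal eigenspace meets no $W_\lambda$ with $\lambda\neq(n)$), and the paper's subsequent proof of Corollary \ref{MinSolCor} silently assumes the global reading; your restricted interpretation is the one the argument actually supports.
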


\begin{proof}
A direct consequence of part $(2)$ of Theorem \ref{MainThm} and the adiabatic theorem. 
\end{proof}

\begin{cor}
Suppose that the problem Hamiltonian $H_P$ commutes with the action of $SU_d(\mathbb{C})$ or $S_n$. Let $\theta$ be the number of classical states $s \in\mathbb{B}^n$, on which the minimal value of objective function $F$ is attained. Then $\theta$ is bounded below by the number of partitions $\lambda=(\lambda_!,\hdots,\lambda_s) \vdash n$ with $s\leq d$. 
\label{MinSolCor}
\end{cor}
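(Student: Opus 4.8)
The plan is to deduce Corollary \ref{MinSolCor} from the preceding corollary together with part (1) of Proposition \ref{Amplitudes} and Theorem \ref{MainThm}. First I would recall that, by the previous corollary, for each partition $\lambda \vdash n$ with at most $d$ rows the adiabatic limit $\nu_\lambda = \lim_{p\to\infty}(e^{-i\beta H_{M,\lambda}}e^{-i\beta H_P})^p(\xi_\lambda)$ is a ground state of $H_P$ lying inside $W_\lambda$ (since every operator involved preserves the decomposition \eqref{SchurWeyl}, and $\xi_\lambda = v_{T_\lambda}\otimes w_\lambda \in W_\lambda$ by Theorem \ref{MainThm}). A ground state of $H_P$ is a vector supported on the basis vectors $v_x$ with $x$ a classical minimizer of $F$; hence $\nu_\lambda$ is a nonzero linear combination of such $v_x$'s, and in particular at least one classical minimizer $x$ has $v_x$ appearing with nonzero coefficient in $\nu_\lambda$, i.e. $\nu_\lambda$ "detects" a classical solution.

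The key step is then to argue that the ground states $\nu_\lambda$ arising from distinct partitions $\lambda$ detect \emph{distinct} classical minimizers, which forces $\theta \geq \#\{\lambda \vdash n : r(\lambda)\leq d\}$. Here I would use that the subspaces $W_\lambda$ are mutually orthogonal (they are the isotypic components of the $S_n$--$SU_d(\mathbb{C})$ action on $W$, an orthogonal decomposition). Suppose a fixed classical minimizer $x$ were detected by both $\nu_\lambda$ and $\nu_\mu$ with $\lambda\neq\mu$. Orthogonality of $W_\lambda$ and $W_\mu$ is not by itself contradictory, so the real argument must be finer: one shows that the coefficient of $v_x$ in $\nu_\lambda$ is determined — up to the overall scalar — by the projection of $v_x$ onto $W_\lambda$, and invokes part (1) of Proposition \ref{Amplitudes} plus the fact that $\nu_\lambda$ is a ground state to pin down which $v_x$'s can occur in $W_\lambda$. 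Concretely, the ground eigenspace $E\subset W$ of $H_P$ is spanned by $\{v_x : F(x) = \min F\}$, and $\nu_\lambda \in E \cap W_\lambda$; since $E$ is itself a representation of $SU_d(\mathbb{C})$ (and of $S_n$) by the commuting assumption, $E$ decomposes as $\bigoplus_\lambda (E\cap W_\lambda)$, and a counting/support argument shows each nonzero summand $E\cap W_\lambda$ contributes a minimizer not accounted for by the others — giving at least one distinct minimizer per partition with $E\cap W_\lambda\neq 0$. The cleanest route is to observe that $E\cap W_\lambda \neq 0$ for \emph{every} $\lambda$ with $r(\lambda)\leq d$: indeed $\nu_\lambda$ exhibits such a nonzero vector, so all $\#\{\lambda: r(\lambda)\leq d\}$ summands are nonzero, hence $\dim E \geq \#\{\lambda: r(\lambda)\leq d\}$, and since $\dim E = \theta$ (as $E$ has the $\theta$ vectors $v_x$ as a basis) we conclude $\theta \geq \#\{\lambda : r(\lambda)\leq d\}$.

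I would therefore organize the write-up as: (i) $E := \ker(H_P - (\min F)\cdot I)$ has dimension exactly $\theta$, with basis $\{v_x : F(x)=\min F\}$; (ii) $E$ is an $SU_d(\mathbb{C})$- (resp. $S_n$-) subrepresentation of $W$, hence $E = \bigoplus_{r(\lambda)\leq d} (E\cap W_\lambda)$ by Schur's Lemma applied to the decomposition \eqref{SchurWeyl}; (iii) for each $\lambda$ with $r(\lambda)\leq d$, the vector $\nu_\lambda$ from the previous corollary lies in $E\cap W_\lambda$ and is nonzero (being a limit of unit vectors, up to normalization, with the adiabatic theorem guaranteeing convergence to a genuine ground state), so $E\cap W_\lambda\neq 0$; (iv) therefore $\theta = \dim E = \sum_{r(\lambda)\leq d}\dim(E\cap W_\lambda) \geq \#\{\lambda\vdash n : r(\lambda)\leq d\}$.

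The main obstacle I anticipate is step (iii): ensuring $\nu_\lambda \neq 0$ and genuinely lies in $E$. One must be careful that the adiabatic limit exists and is nonzero — the operators $e^{-i\beta H_{M,\lambda}}e^{-i\beta H_P}$ are unitary, so norms are preserved and $\|\nu_\lambda\| = \|\xi_\lambda\| \neq 0$, which handles nonvanishing; that $\nu_\lambda \in E$ is exactly the content of the previous corollary via the adiabatic theorem, and that $\nu_\lambda \in W_\lambda$ follows since $H_P$ and $H_{M,\lambda}$ both preserve \eqref{SchurWeyl} (the former by the commuting hypothesis and Schur's Lemma, the latter by Theorem \ref{MainThm}). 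A secondary subtlety worth a sentence is that the statement uses the hypothesis "$H_P$ commutes with $SU_d(\mathbb{C})$ \emph{or} $S_n$" — in the $S_n$ case one uses that $E$ is an $S_n$-subrepresentation and the same isotypic decomposition \eqref{SchurWeyl} still applies, so the argument is symmetric in the two hypotheses.
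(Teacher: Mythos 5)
Your proposal is correct and follows essentially the same route as the paper: the paper's proof likewise observes that each $\nu_\lambda$ is a nonzero ground state of $H_P$ lying in the minimal eigenspace (spanned by the classical minimizers), and that the $\nu_\lambda$ for distinct partitions are linearly independent because they lie in distinct summands $W_\lambda$, so the dimension of that eigenspace, which equals $\theta$, is at least the number of admissible partitions. Your steps (i)--(iv) simply make explicit the linear-independence claim the paper states in one line; the digression about distinct minimizers being ``detected'' is unnecessary, as you yourself conclude.
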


\begin{cor}
Suppose that the problem Hamiltonian $H_P$ commutes with the action of $SU_d(\mathbb{C})$ or $S_n$. Then the original QAOA can be reduced to $QAOA_\lambda$ with ambient vector space $\mathbb{S}_{\lambda}\otimes W_\lambda$ of  dimension given by a polynomial function in $n$.
\end{cor}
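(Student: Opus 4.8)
The plan is to realize the hypothesis on $H_P$ as the statement that the symmetry group $K$ in Theorem~\ref{MainObs} may be taken to be $SU_d(\mathbb{C})$ (respectively $S_n$), to feed the Schur--Weyl decomposition~\eqref{SchurWeyl} together with the explicit mixers of Theorem~\ref{MainThm} into Theorem~\ref{MainObs}, and finally to read off $\dim W_\lambda$ from the two dimension formulas \eqref{HLformula} and \eqref{DimFormula}. The commuting hypothesis enters in two essential places. First, it identifies the decomposition of $W$ into $K$-isotypic components with the Schur--Weyl decomposition $W=\bigoplus_\lambda \mathbb{S}_\lambda\otimes V_\lambda$ of \eqref{SchurWeyl}. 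Second, since $H_P$ commutes with the whole of $K$ it is a morphism of $K$-modules, so by Lemma~\ref{Schur} it preserves each summand $W_\lambda=\mathbb{S}_\lambda\otimes V_\lambda$; this is precisely what guarantees that $H_P$ (and hence $e^{-i\gamma H_P}$) does not leak out of $W_\lambda$, i.e. the condition needed for the reduced algorithm of Theorem~\ref{MainObs} to remain well defined.

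First I would invoke Theorem~\ref{MainThm}: for every $\lambda\vdash n$ with at most $d$ rows, the mixer $H_{M,\lambda}$ satisfies all three bullets of Theorem~\ref{MainObs}, and its one-dimensional lowest-energy eigenspace is spanned by $\xi_\lambda=v_{T_\lambda}\otimes w_\lambda\in W_\lambda$. Applying Theorem~\ref{MainObs} with $K$ as above, the decomposition~\eqref{SchurWeyl}, mixer $H_{M,\lambda}$ and initial state $\xi_\lambda$ then produces the reduced algorithm $QAOA_\lambda$ with problem Hamiltonian $H_P$ and ambient Hilbert space equal to the $\lambda$-summand $W_\lambda=\mathbb{S}_\lambda\otimes V_\lambda$ of \eqref{SchurWeyl}.

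It remains to show that $\dim W_\lambda=\dim(\mathbb{S}_\lambda)\cdot\dim(V_\lambda)$ is polynomial in $n$ for a suitable infinite family of partitions, namely those (as in the set of Example~\ref{PolDim}) in which the number of rows $s\le d$ and the lower rows $\lambda_2,\dots,\lambda_s$ are held fixed while $\lambda_1=n-(\lambda_2+\dots+\lambda_s)\to\infty$. The factor $\dim(V_\lambda)$ is immediate from \eqref{DimFormula}: it is a product of $\binom{d}{2}$ terms linear in the $\lambda_i$, of which only the $d-1$ terms indexed by $i=1$ involve $\lambda_1$, so $\dim(V_\lambda)$ is a polynomial in $n$ of degree $d-1$. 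For $\dim(\mathbb{S}_\lambda)$ I would use the hook-length formula~\eqref{HLformula}: the boxes $(1,j)$ with $\lambda_2<j\le\lambda_1$ protrude past the second row, have leg length zero, and hence carry hook lengths $1,2,\dots,\lambda_1-\lambda_2$, contributing $(\lambda_1-\lambda_2)!$ to the denominator; the finitely many remaining boxes carry hook lengths that are either constant (those in rows $\ge 2$) or linear in $\lambda_1$ (those in row $1$, columns $\le\lambda_2$). Since $n-(\lambda_1-\lambda_2)=2\lambda_2+\lambda_3+\dots+\lambda_s$ is a constant, the quotient $n!/(\lambda_1-\lambda_2)!$ is a product of a fixed number of consecutive integers near $n$, and after dividing by the remaining bounded product of linear terms one is left with a polynomial in $n$. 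Thus $\dim W_\lambda$ is a product of two polynomials in $n$.

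The hard part is this very last step: the hook-length formula carries a super-polynomial $n!$ in its numerator, and one must argue that the protruding hooks of the long first row cancel all but boundedly many of its factors. To make this rigorous without dividing polynomials, I would, if pressed, replace \eqref{HLformula} by the equivalent determinantal form $\dim(\mathbb{S}_\lambda)=n!\,\prod_{i<j}(\ell_i-\ell_j)\,/\,\prod_i \ell_i!$ with $\ell_i=\lambda_i+s-i$: here $\ell_1=\lambda_1+s-1$ while $\ell_2,\dots,\ell_s$ are constants, $n!/\ell_1!$ is manifestly the polynomial $n(n-1)\cdots(\ell_1+1)$ of constant degree $n-\ell_1$, the factors $\prod_{j\ge 2}(\ell_1-\ell_j)$ are linear in $\ell_1\sim n$, and every remaining factor is constant. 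This exhibits $\dim(\mathbb{S}_\lambda)$ as a manifest polynomial in $n$ and completes the argument.
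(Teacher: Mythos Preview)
Your proposal is correct and follows essentially the same approach as the paper. The paper does not give a separate proof of this corollary; it simply follows it with Example~\ref{PolDim}, which exhibits the hook partitions $\lambda=(n-k,1,\dots,1)$ for $0\le k<d$, computes $\dim(\mathbb{S}_\lambda)=\binom{n-1}{k}$ directly from~\eqref{HLformula}, and reads off from~\eqref{DimFormula} that $\dim(V_\lambda)$ is a polynomial of degree $d-1$ in $n$. Your argument does the same thing for the more general family in which all of $\lambda_2,\dots,\lambda_s$ are held fixed (the paper's hooks are the special case $\lambda_2=\dots=\lambda_s=1$); the extra bookkeeping you supply via the determinantal form $n!\prod_{i<j}(\ell_i-\ell_j)/\prod_i\ell_i!$ is a clean way to avoid the apparent division in the hook-length computation, but it is not needed for the hook case, where the quotient $\frac{(n-1)!}{(n-k-1)!\,k!}$ is already a manifest polynomial of degree $k$. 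In short, the paper's ``proof'' is the single explicit family of Example~\ref{PolDim}, and your argument is a mild generalization of the same computation.
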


\begin{ex}
One choice of such a diagram $\lambda$ is a hook diagram corresponding to partition $\lambda=(n-k,1,1,\hdots,1)$ with $0\leq k<d$ (see Figure \ref{HookRep}). 

The dimension of representation $\mathbb{S}_\lambda$ is 
\[
\frac{n!}{1\cdot 2\cdot\hdots\cdot (n-k-1)\cdot n\cdot k\cdot\hdots\cdot 1}=\frac{(n-1)!}{(n-k-1)!k!} ~~~\text{ (see \eqref{HLformula})},
\] 
while the dimension of $V_\lambda$  is given by a polynomial $\mathcal{P}_\lambda(n)$ of degree $d-1$ (follows from \eqref{DimFormula}). Hence, the dimension of $\mathbb{S}_{\lambda}\otimes V_\lambda$ is given by a polynomial of degree $d+k-1$ in $n$. 
\label{PolDim}
\end{ex}
\begin{figure}[htbp!]
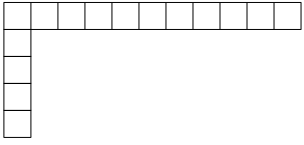

    \centering
    $\young(~~~~~~~~~~~,~,~,~,~)$
    \caption{Diagram of shape $\lambda=(n-k,1,1,\hdots,1)$}
    \label{HookRep}
\end{figure}

\section{Conclusion and potential applications}
In conclusion, our work introduces a novel approach leveraging representation theory principles for groups of symmetries in the problem Hamiltonian within the context of Quantum Approximate Optimization Algorithm. This approach leads to significant reductions in the Hilbert space, as demonstrated in Theorem \ref{MainObs}, potentially offering substantial practical implications for development of novel and efficient QAOA algorithms.

As an example, when the group of symmetries involves the symmetric or unitary group, we have successfully constructed multiple reductions of the Hilbert space for the original QAOA. Each of these reductions, denoted as $QAOA_\lambda$, preserves the problem Hamiltonian $H_P$, but varies in terms of the mixer Hamiltonian $H_{M,\lambda}$ and initial state $\xi_\lambda$, confined within the subspace $W_{\lambda}$.

As a result, our findings have direct relevance to classical simulation techniques. Recall that $W$ represents the Hilbert space required for an exact simulation of QAOA, and likewise, each $W_\lambda$ is the vector space for simulating $QAOA_\lambda$. As outlined in Section $3.2$ of \cite{SHHS}, efficient classical simulation becomes feasible under the assumption that the matrix elements of $H_{M,\lambda}$ and $H_P$ (as operators on $W_\lambda$) can be computed with reasonable speed.

Additionally, our findings may offer valuable insights into addressing the prominent challenge known as the barren plateau problem in parameterized quantum circuits. This phenomenon, characterized by exponentially diminishing gradients of the cost function with respect to the parameters as the number of qubits increases, poses a significant obstacle to classical optimization algorithms during the training process. By considering reduced QAOAs, we present an additional avenue to cope with  this challenge, complementing existing strategies proposed in the literature (see \cite{GWOB,kulshrestha2022beinit, MBSN} and references therein).

In summary, our contributions not only enhance understanding of the theoretical underpinnings of QAOA but also open up new avenues for efficient classical circuit simulations and offer potential solutions to the barren plateau problem, thereby advancing the practical applicability of quantum optimization strategies.

\section*{Acknowledgements}
This work was supported  in part with funding from the Defense Advanced Research Projects Agency (DARPA). The views, opinions and/or  findings expressed are those of the authors and should not be interpreted as representing the official views or policies  of the Department of Defense or the U.S. Government.
Y.A. acknowledges support from the U.S. Department of Energy, Office of Science, under contract DE-AC02-06CH11357 at Argonne National Laboratory. 

\section*{Appendix: The Proofs}
We now provide rigorous verifications of the results announced in the previous sections.

Theorem \ref{HamThm} asserted that the group of symmetries of  
Hamiltonian  
\[
H_P=\sum\limits_{k}\widetilde{\beta}_kZ_k+\sum\limits_{i,j}\widetilde{\alpha}_{i,j}Z_iZ_j
\]
consists of unitary matrices, which are block-diagonal in the standard basis: \[   \left(\begin{array}{ c|c|c|c }
    M_1 & 0 & \hdots & 0 \\
    \hline
    0 & M_2 & \ddots & 0 \\
    \hline
    \vdots & \ddots & \ddots  & \vdots\\
    \hline
    0 & 0 & \hdots & M_t
  \end{array}\right)\subseteq SU_N(\mathbb{C}).
\]

\begin{proof}
The action of $H_P$ on a standard basic vector $v_b$ is given by $H_P(v_b)=\lambda_bv_b$, where $\lambda_b=\sum\limits_{b_i\equiv b_j} \widetilde{\alpha}_{ij}-\sum\limits_{b_i\not\equiv b_j} \widetilde{\alpha}_{ij} + \sum\limits_{b_i\equiv 0} \widetilde{\beta}_{i}-\sum\limits_{b_j\equiv 1} \widetilde{\beta}_{j}$. Hence, in the standard basis  the matrix of problem Hamiltonian is block-diagonal, 
\[
H_P=
 \left(\begin{array}{ c|c|c|c }
    \lambda_1\cdot I & 0 & \hdots & 0 \\
    \hline
    0 & \lambda_2\cdot I & \ddots & 0 \\
    \hline
    \vdots & \ddots & \ddots  & \vdots\\
    \hline
    0 & 0 & \hdots & \lambda_t\cdot I
  \end{array}\right)\]
  with $\lambda_s\in\{\lambda_b~|~b \in\mathbb{B}^n \}$. Here $t$ is the number of different values attained by $\lambda_b$'s. The matrices that commute with $H_P$, written in the standard basis, have  block-diagonal form indicated in the theorem.
\end{proof}

Next we present the proof of Theorem \ref{MainThm} and its Corollary \ref{MinSolCor} giving a lower bound on the number of classical states $s \in\mathbb{B}^n$, on which the minimal value of objective function $F$ is attained.

Recall that the theorem consists of three assertions on Hamiltonian 
\[\begin{aligned}
    &H_{M,\lambda}= \sum\limits_{i=1}^s \left(\mathcal{J}_{\lambda_1+\hdots+\lambda_i}-(\lambda_i-i)\cdot I\right)^{2} +\\
    &  \varepsilon\sum\limits_{j=1}^n I\otimes\hdots\otimes\underset{j}{\tau}\otimes\hdots\otimes I:
\end{aligned}\]

\begin{enumerate}
\item the action of $H_{M,\lambda}$ preserves each direct summand of \eqref{SchurWeyl};

\item the lowest-energy eigenspace of the Hamiltonian $H_{M,\lambda}$ is one-dimensional and spanned by the vector $\xi_\lambda=v_{T_{\lambda}}\otimes w_\lambda$.
\item $H_{M,\lambda}$ satisfies the conditions of Theorem \ref{PF}.
\end{enumerate}

The first statement follows from the fact that $H_{M,\lambda}$ is a sum of an element in the group algebra $\mathbb{C}[S_n]$ and an element in $U(\mathfrak{su}_d)$, the universal enveloping algebra of $SU_d(\mathbb{C})$. 

To confirm the second assertion, we proceed in two stages. Initially, we consider the Hamiltonian $\widetilde{H}_{M,\lambda}=H_{M,\lambda}-\varepsilon\sum\limits_{j=1}^n I\otimes\hdots\otimes\underset{j}{\tau}\otimes\hdots\otimes I$, which is an element residing in  $\mathbb{C}[S_n]$. It follows that $\widetilde{H}_{M,\lambda}$ acts trivially on each $V_\lambda$ in decomposition \eqref{SchurWeyl}. The subspace of minimal energy for $\widetilde{H}{M,\lambda}$ is identified as $v{T_{\lambda}}\otimes V_\lambda$. This is demonstrated by the condition $\widetilde{H}{M,\lambda}(v_T)=0$, which fixes the positions of boxes labeled with numbers $\lambda_1,\lambda_1+\lambda_2,\hdots,\lambda_1+\hdots+\lambda_s$. In turn, there is a unique Young tableau $T=T_\lambda$ satisfying the proposed arrangement of aforementioned  boxes. Hence, $v_{T_{\lambda}}\otimes V_\lambda$ is the lowest-energy eigenspace of $\widetilde{H}_{M,\lambda}$. Next we notice that the absolute values of eigenvalues of the operator $\varepsilon\sum\limits_{j=1}^n I\otimes\hdots\otimes\underset{j}{\tau}\otimes\hdots\otimes I$ on $W$ are bounded above by $\varepsilon \cdot n\cdot\frac{d}{2}<\frac{2dn}{2dn}=1$. As the eigenvalues of $\widetilde{H}_{M,\lambda}$ are positive integers (see Remark \ref{YJMevals}), the lowest-energy eigenvector of $H_{M,\lambda}$ coincides with the lowest weight vector for $\tau$ in $v_{T_{\lambda}}\otimes V_\lambda$. The latter is exactly $v_{T_{\lambda}}\otimes w_\lambda$.    

The statement in the corollary can be verified as follows. For each $Y_\lambda$ with at most $d$ rows, the vector $\nu_\lambda=\lim\limits_{p\rightarrow \infty}(e^{-i\beta H_{M,\lambda}}e^{-i\beta H_P})^p (\xi_\lambda)$ is an eigenvector of $H_P$ with minimal eigenvalue. The associated eigenspace $V_{\min}\subseteq W$ is spanned by classical states that correspond to minima of the objective function $F$ on $\mathbb{D}^n$. Consequently, $\nu_\lambda \in V_{\min}$.  It remains to notice that the vectors $\nu_\lambda$ are linearly independent.


\end{document}